\newtheorem{theorem}{Theorem}
\newtheorem{example}{Example}
\newtheorem{lemma}{Lemma}
\begin{document}

\title{Construction of MDS Euclidean Self-Dual Codes via Two Subsets}
\author{\small Weijun Fang$^{1,2,}$\thanks{Corresponding Author}  \ Shu-Tao Xia$^{1,2}$ \ Fang-Wei Fu$^{3,4}$ \\
\scriptsize $^1$ Tsinghua Shenzhen International Graduate School, Tsinghua University, Shenzhen 518055,  China \\
\scriptsize $^2$ PCL Research Center of Networks and Communications, Peng Cheng Laboratory, Shenzhen 518055,  China\\
\scriptsize $^3$  Chern Institute of Mathematics and LPMC, Nankai University, Tianjin 300071,  China\\
\scriptsize $^4$ Tianjin Key Laboratory of Network and Data Security Technology, Nankai University, Tianjin 300071,  China\\
\scriptsize E-mail: nankaifwj@163.com, xiast@sz.tsinghua.edu.cn, fwfu@nankai.edu.cn\\
}
\date{}
\maketitle
\thispagestyle{empty}
\begin{abstract}
The parameters of a $q$-ary MDS Euclidean self-dual codes are completely determined by its length and the construction of MDS Euclidean self-dual codes with new length has been widely investigated in recent years. In this paper, we give a further study on the construction of MDS Euclidean self-dual codes via generalized Reed-Solomon (GRS) codes and their extended codes.  The main idea of our construction is to choose suitable evaluation points such that the corresponding (extended) GRS codes are Euclidean self-dual.

Firstly, we consider the evaluation set consists of two disjoint subsets, one of which is based on the trace function, the other one is a union of a subspace and its cosets. Then four new families of MDS Euclidean self-dual codes are constructed. Secondly, we give a simple but useful lemma to ensure that the symmetric difference of two intersecting subsets of finite fields can be taken as the desired evaluation set. Based on this lemma, we generalize our first construction and provide two new families of MDS Euclidean self-dual codes. Finally, by using two multiplicative subgroups and their cosets which have nonempty intersection, we present three generic constructions of  MDS Euclidean self-dual codes with flexible parameters. Several new families of MDS Euclidean self-dual codes are explicitly constructed.

\end{abstract}

\small\textbf{Keywords:} MDS codes, self-dual codes, generalized Reed-Solomon codes, trace function, symmetric difference

\maketitle

%-------------------------------------------------------------------------------
\section{Introduction}

Due to their nice properties and wide applications, MDS codes and Euclidean self-dual codes are two important classes of linear codes in coding theory. Let $q$ be a prime power and $\mathbb{F}_{q}$ be the finite field with $q$ elements. A $q$-ary $[n, k, d]$-linear code is defined as a $k$-dimensional subspace of $\mathbb{F}^{n}_{q}$ with minimum Hamming distance $d$. One of the relations among these parameters is the well-known Singleton bound, which says that any $[n, k, d]$-linear code has to satisfy that
\[d \leq n-k+1.\]
An $[n, k, d]$-linear code $C$ is called a \emph{maximum distance separable} (MDS) code if it achieves the Singleton bound with equality. MDS codes are closely related to some other mathematical aspects, such as the orthogonal arrays in combinatorial design and the $n$-arcs in finite geometry (see \cite[Chap. 11]{MS77}). MDS codes also have widespread applications in data storage, such as coding for distributed storage systems. As an most important class of MDS codes, Reed-Solomon codes have important applications in engineering due to their easy encoding and efficient decoding algorithm.

For any two vectors $\textbf{x}= (x_{1}, \ldots, x_{n}) \in \mathbb{F}_{q}^{n}$ and $\textbf{y}= (y_{1}, \ldots, y_{n}) \in \mathbb{F}_{q}^{n}$, we define their Euclidean inner product as
\[\langle \textbf{x}, \textbf{y}\rangle = \sum_{i=1}^{n} x_i y_i.\]
The Euclidean dual code $C^{\perp}$ of $C$ then is given as
\[C^{\perp} := \{\textbf{x} \in \mathbb{F}_{q}^{n}: \langle \textbf{x}, \textbf{y}\rangle=0, \textnormal{ for any } \textbf{y} \in C \}.\]
$C$ is called a \emph{Euclidean self-dual} code if $C^{\perp}=C$. It is obvious that the length of a Euclidean self-dual code is even. It has been proved in \cite{P68} that a $q$-ary Euclidean self-dual code of even length $n$ exists if and only if $(-1)^{\frac{n}{2}}$ is a square element in $\mathbb{F}_{q}$. There are some well-known linear codes which are also Euclidean self-dual codes, such as the $[24, 12, 8]_{2}$-extended binary Golay code, the $[12, 6, 6]_{3}$-extended ternary Golay code and the Pless symmetry codes (see \cite{HP03}). Euclidean self-dual codes have also been found various interesting applications in other aspects. In \cite{C08} and \cite{DMS08}, the authors established the connections between self-dual codes and linear secret sharing schemes (LSSSs). Euclidean self-dual codes are closely related to combinatorics and unimodular integer lattices (see \cite{HP03, CS99}). Therefore, it is of great interest to investigate the MDS Euclidean self-dual codes. One of the basic problem for this theme is to determine existence of MDS Euclidean self-dual codes. Note that a $q$-ary MDS Euclidean self-dual code of length $n$ has dimension $\frac{n}{2}$ and minimum distance $\frac{n}{2}+1$. So it is sufficient to consider the problem for which lengths an MDS Euclidean self-dual code over $\mathbb{F}_{q}$ exists.

\subsection*{Related Work}
In recent years, this problem has been extensively studied \cite{GK02,BGGHK03,BBDW04,KLISIT04,KL04,HL06,GKL08,YC15}. For $q$ is even, Grassl and Gulliver \cite{GG08} proved that there is a $q$-ary MDS Euclidean self-dual code of even length $n$ for all $n \leq q$. Some new MDS Euclidean self-dual codes were obtained through cyclic and constacyclic codes in \cite{G12} and \cite{TW17}. In \cite{JX17}, Jin and Xing first presented a systematic approach to construct MDS Euclidean self-dual codes by utilizing generalized Reed-Solomon (GRS) codes over finite fields. Several new classes of MDS Euclidean self-dual codes are obtained by choosing suitable evaluation points. Since then, GRS codes becomes one class of the most commonly used tools to construct MDS Euclidean self-dual codes. Yan \cite{Y18}, Fang and Fu \cite{FF19} generalized this method to extended GRS codes and construct several new families of  MDS Euclidean self-dual codes with flexible parameters. Zhang and Feng \cite{ZF19} presented a unified approach to obtain some known results with concise statements and simplified proofs. In \cite{FLLL19,LLL19}, the authors constructed some families of Euclidean self-dual GRS codes with new parameters via multiplicative subgroups. As far as we known,  most of previously known results obtained from GRS codes considered the evaluation set consists of a multiplicative subgroup of $\mathbb{F}_{q}^{*}$ and its cosets, or a subspace of $\mathbb{F}_{q}$ and its cosets. In \cite{FLL19}, Fang \emph{et al.} took the evaluation set as the union of two disjoint multiplicative subgroups and their cosets.  In \cite{ZF19-2}, Zhang and Feng presented some new constructions of MDS Euclidean self-dual codes via cyclotomy. In \cite{S19}, Sok gave some explicit constructions of MDS Euclidean self-dual codes via rational function fields. In the following Table I, we summary some known results about the construction of MDS Euclidean self-dual codes.

From Table I, significant progress has been made on the construction of MDS Euclidean self-dual codes. However, it is still a great challenge to determine the existence of  $q$-ary MDS Euclidean self-dual codes of length $n$ for all possible even $n \leq q+1$.

\newcommand{\tabincell}[2]{\begin{tabular}{@{}#1@{}}#2\end{tabular}}
\begin{table}[htbp]
\scriptsize
\centering
\caption{Some known results on MDS Euclidean self-dual codes of even length $n$}
\vskip 3mm
\begin{tabular}{|c|c|c|}
  \hline
  % after \\: \hline or \cline{col1-col2} \cline{col3-col4} ...
  $q$ & $n$ & References \\
  \hline
  $q$ even  & $n \leq q$ & \cite{GG08} \\
  \hline
  $q$ odd  & $n = q+1$ & \cite{GG08,JX17} \\
  \hline
  $q=r^{2}$ & $n \leq r$ & \cite{JX17} \\
  \hline
  $q=r^{2}$, $r \equiv 3 (\textnormal{mod } 4)$ & $n=2tr, t \leq \frac{r-1}{2}$ & \cite{JX17} \\
  \hline
  $q \equiv 1 (\textnormal{mod } 4)$ & $4^{n}n^{2} \leq q$ & \cite{JX17} \\
  \hline
  $q \equiv \textnormal{3 (mod 4)}$ & $n \equiv \textnormal{0 (mod 4)}$ and $(n-1) \mid (q-1)$ & \cite{TW17}\\
  \hline
  $q \equiv 1 (\textnormal{mod } 4)$ & $(n-1) \mid (q-1)$ & \cite{TW17} \\
  \hline
  $q \equiv 1 (\textnormal{mod } 4)$ & $n=2p^{\ell}, \ell \leq m$ & \cite{FF19} \\
  \hline
  $q \equiv 1 (\textnormal{mod } 4)$ & $n=p^{\ell}+1, \ell \leq m$ & \cite{FF19} \\
  \hline
  $q=r^{s}$, $r$ odd, $s$ even & \tabincell{c}{$n=2tr^{\ell}$, $0 \leq \ell <s$, \\ and $1 \leq t \leq \frac{r-1}{2}$} & \cite{FF19} \\
  \hline
  $q=r^{s}$, $r$ odd, $s$ is even & \tabincell{c}{$n=(2t+1)r^{\ell}+1$, $0 \leq \ell <s$, \\and $0 \leq t \leq \frac{r-1}{2}$} & \cite{FF19} \\
  \hline
  $q$ odd & $(n-2) \mid (q-1)$, $\eta(2-n)=1$ & \cite{FF19,Y18} \\
  \hline
  $q \equiv 1 (\textnormal{mod } 4)$ & $n \mid (q-1)$ & \cite{Y18} \\
  \hline
  $q=r^{s}$, $r$ odd, $s \geq 2$ & $n=tr$, $t$ even  and  $2t \mid (r-1)$ &\cite{Y18}\\
  \hline
  $q=r^{s}$, $r$ odd, $s \geq 2$ & \tabincell{c}{$n=tr$, $t$ even, $(t-1) \mid (r-1)$ \\and $\eta(1-t)=1$} &\cite{Y18}\\
  \hline
  $q=r^{s}$, $r$ odd, $s \geq 2$ & \tabincell{c}{$n=tr+1$, $t$ odd, $t \mid (r-1)$ \\and $\eta(t)=1$} &\cite{Y18}\\
  \hline
  $q=r^{s}$, $r$ odd, $s \geq 2$ & \tabincell{c}{$n=tr+1$, $t$ odd, $(t-1) \mid (r-1)$\\ and $\eta(t-1)=\eta(-1)=1$} &\cite{Y18}\\
  \hline
  $q=r^{2}$, $r$ odd & \tabincell{c}{$n=tm$, $\frac{q-1}{m}$ even, \\ and $1 \leq t \leq \frac{r+1}{\gcd(r+1, m)}$} & \cite{FLLL19}\\
  \hline
  $q=r^{2}$, $r$ odd & \tabincell{c}{$n=tm+1$, $tm$ odd, $m \mid (q-1)$, \\ and $2 \leq t \leq \frac{r+1}{2\gcd(r+1, m)}$} & \cite{FLLL19}\\
  \hline
  $q=r^{2}$, $r$ odd & \tabincell{c}{$n=tm+2$, $m \mid (q-1)$, $tm$ even \\(except when $t, m$ are even and $r \equiv 1 (\textnormal{mod } 4)$),  \\ and $1 \leq t \leq \frac{r+1}{\gcd(r+1, m)}$} & \cite{FLLL19}\\
  \hline
  $q=r^{2}$, $r$ odd & \tabincell{c}{$n=tm$, $\frac{q-1}{m}$ even, $1 \leq t \leq \frac{s(r-1)}{\gcd(s(r-1), m)}$\\ $s$ even, $s \mid m$, and $\frac{r+1}{s}$ even } & \cite{FLLL19}\\
  \hline
  $q=r^{2}$, $r$ odd & \tabincell{c}{$n=tm+2$, $\frac{q-1}{m}$ even, $1 \leq t \leq \frac{s(r-1)}{\gcd(s(r-1), m)}$\\ $s$ even, $s \mid m$, and $\frac{r+1}{s}$ even } & \cite{FLLL19}\\
  \hline
  $q=r^{2}$, $r$ odd & \tabincell{c}{$n=tm$, $\frac{q-1}{m}$ even, \\ and $1 \leq t \leq \frac{r-1}{\gcd(r-1, m)}$} & \cite{LLL19}\\
  \hline
  $q=r^{2}$, $r$ odd & \tabincell{c}{$n=tm+1$, $tm$ odd, $m \mid (q-1)$, \\ and $2 \leq t \leq \frac{r-1}{\gcd(r-1, m)}$} & \cite{LLL19}\\
  \hline
  $q=r^{2}$, $r$ odd & \tabincell{c}{$n=tm+2$, $tm$ even, $m \mid (q-1)$, \\ and $2 \leq t \leq \frac{r-1}{\gcd(r-1, m)}$} & \cite{LLL19}\\
  \hline
  $q=r^{2}$, $r \equiv 1 (\textnormal{mod } 4)$ & \tabincell{c}{$n=s(r-1)+t(r+1)$, $s$ even, \\$1 \leq s \leq \frac{r+1}{2}$ $1 \leq t \leq \frac{r-1}{2}$} & \cite{FLL19} \\
  \hline
  $q=r^{2}$, $r \equiv 3 (\textnormal{mod } 4)$ & \tabincell{c}{$n=s(r-1)+t(r+1)$, $s$ odd, \\$1 \leq s \leq \frac{r+1}{2}$ $1 \leq t \leq \frac{r-1}{2}$} & \cite{FLL19} \\
  \hline
\end{tabular}
\end{table}

\subsection*{Main Results}
In this paper, we further study the construction of MDS Euclidean self-dual codes with new parameters by using (extended) GRS codes over finite fields. The key point of our constructions is to choose suitable evaluation set $A \subseteq \mathbb{F}_{q}$ such that $\eta(\delta_{A}(a))$ are the same for all $a \in A$ or $\eta(-\delta_{A}(a))=1$ for all $a \in A$ (see Lemmas \ref{lem2} and \ref{lem3}), where $\eta(x)$ is the quadratic character of $\mathbb{F}_{q}^{*}$. The evaluation sets in our constructions are based on two subsets of finite fields.

More precisely, we first consider the set of evaluation points consists of two subsets, one of which is constructed from the trace function over finite fields, and the other one is the union of a suitable subspace of $\mathbb{F}_{q}$ and its cosets. Our first new construction of MDS Euclidean self-dual codes (see Theorems \ref{thm1} and \ref{thm2}) is then given when these two subsets are disjoint. Secondly, we provide a simple but useful lemma (see Lemma \ref{lem5}), which provides a sufficient condition such that the symmetric difference of two intersecting subsets satisfies the conditions in Lemmas \ref{lem2} or \ref{lem3}. Based on Lemma \ref{lem5}, we generalize our first construction and obtain two new families of MDS Euclidean self-dual codes (see Theorems \ref{3} and \ref{4}). Finally, by using two multiplicative subgroups and their cosets and Lemma \ref{lem5}, we present three generic constructions of MDS Euclidean self-dual codes with flexible parameters (see Theorems \ref{thm5}-\ref{thm7}). From these three powerful constructions, we can obtain several new families of MDS Euclidean self-dual codes by choosing different pairs $(\mu, \nu)$ (see Theorems \ref{thm8}-\ref{thm16}). To the best of our knowledge, this is the first systematic construction of MDS Euclidean self-dual codes by choosing two intersecting subsets as the evaluation set.

Let $p$ be an odd prime, $q=r^{2}$ and $r=p^{m}$ for some positive integer $m$. We summary our main results as follows.  If one of the following conditions holds, then there exists an MDS Euclidean self-dual codes of length $N$ over $\mathbb{F}_{q}$.

\begin{description}
  \item[(1)] $N=tr+sp^{\lceil\log_{p}(t)\rceil}$, for any even $t$ and even $s$ with $1 \leq t \leq r$ and $0 \leq s \leq p^{m-\lceil\log_{p}(t)\rceil}-1$; (see Theorem \ref{thm1})
  \item[(2)] $N=tr+sp^{\lceil\log_{p}(t)\rceil}+1$, for any odd $t$ and even $s$ with $1 \leq t \leq r$ and $0 \leq s \leq p^{m-\lceil\log_{p}(t)\rceil}-1$; (see Theorem \ref{thm2})
  \item[(3)] $r \equiv 3 (\textnormal{mod } 4)$, $N=tr+(s+1)p^{\lceil\log_{p}(t)\rceil}-2t$, for any odd $t$ and even $s$ with $1 \leq t \leq r$ and $0 \leq s \leq p^{m-\lceil\log_{p}(t)\rceil}-1$; (see Theorem \ref{thm3})
  \item[(4)] $r \equiv 3 (\textnormal{mod } 4)$, $N=tr+(s+1)p^{\lceil\log_{p}(t)\rceil}-2t+1$, for any even $t$ and even $s$ with $1 \leq t \leq r$ and $0 \leq s \leq p^{m-\lceil\log_{p}(t)\rceil}-1$; (see Theorem \ref{thm4})
  \item[(5)] $r \equiv 3 (\textnormal{mod } 4)$, $N=s(r-1)+t(r+1)-2st$ or $N=s(r-1)+t(r+1)-2st+2$, for any $0 \leq s \leq \frac{r+1}{2}$ and $0 \leq t \leq \frac{r-1}{2}$; (see Theorem \ref{thm8})
  \item[(6)] $r \equiv 3 (\textnormal{mod } 4)$, $N=s(r-1)+2t(r+1)-4st$, for any odd $s$ with $0 \leq s \leq r+1$ and $0 \leq t \leq \frac{r-1}{2}$; (see Theorem \ref{thm9} (i))
  \item[(7)] $r \equiv 3 (\textnormal{mod } 4)$, $N=s(r-1)+2t(r+1)-4st+2$, for any  $0 \leq s \leq r+1$ with $s \equiv 0 (\textnormal{mod } 4)$ and $0 \leq t \leq \frac{r-1}{2}$; (see Theorem \ref{thm9} (ii))
  \item[(8)] $r \equiv 3 (\textnormal{mod } 4)$, $N=2s(r-1)+t(r+1)-8st$ or $N=2s(r-1)+t(r+1)-8st+2$, for any $0 \leq s \leq \frac{r+1}{4}$ and $0 \leq t \leq \frac{r-1}{2}$;  (see Theorem \ref{thm10})
  \item[(9)] $r \equiv 3 (\textnormal{mod } 4)$, $N=s\frac{r-1}{2}+t(r+1)-2st$, for any even $s$ with $0 \leq s \leq r+1$ and $0 \leq t \leq \frac{r-1}{2}$; (see Theorem \ref{thm11} (i))
  \item[(10)] $r \equiv 3 (\textnormal{mod } 4)$, $N=s\frac{r-1}{2}+t(r+1)-2st+1$, for any $0 \leq s \leq r+1$ with $s \equiv 1 (\textnormal{mod } 4)$ and $0 \leq t \leq \frac{r-1}{2}$; (see Theorem \ref{thm11} (ii))
  \item[(11)] $r \equiv 3 (\textnormal{mod } 4)$, $N=s\frac{r-1}{2}+t(r+1)-2st+2$, for any $0 \leq s \leq r+1$ with $s \equiv 0 (\textnormal{mod } 4)$ and $0 \leq t \leq \frac{r-1}{2}$; (see Theorem \ref{thm11} (iii))
  \item[(12)] $r \equiv 3 (\textnormal{mod } 4)$, $N=s(r-1)+t\frac{r+1}{2}-4st$ or $N=s(r-1)+t\frac{r+1}{2}-4st+2$, for any $0 \leq s \leq \frac{r+1}{4}$ and $0 \leq t \leq \frac{r-1}{2}$; (see Theorem \ref{thm12})
  \item[(13)] $r \equiv 3 (\textnormal{mod } 4)$, $N=2s(r-1)+2t(r+1)-8st+2$, for any $0 \leq s \leq \frac{r+1}{2}$ and $0 \leq t \leq \frac{r-1}{2}$; (see Theorem \ref{thm13})
  \item[(14)] $r \equiv 3 (\textnormal{mod } 4)$, $N=s\frac{r-1}{2}+t\frac{r+1}{2}-2st$, for any even s with $0 \leq s \leq \frac{r+1}{2}$ and $0 \leq t \leq \frac{r-1}{2}$; (see Theorem \ref{thm14} (i))
  \item[(15)] $r \equiv 3 (\textnormal{mod } 4)$, $N=s\frac{r-1}{2}+t\frac{r+1}{2}-2st+1$, for any odd s with $0 \leq s \leq \frac{r+1}{2}$ and $0 \leq t \leq \frac{r-1}{2}$; (see Theorem \ref{thm14} (ii))
  \item[(16)] $r \equiv 3 (\textnormal{mod } 4)$, $N=s\frac{r-1}{2}+t\frac{r+1}{2}-2st+2$, for any even s with $0 \leq s \leq \frac{r+1}{2}$ and $0 \leq t \leq \frac{r-1}{2}$; (see Theorem \ref{thm14} (iii))
  \item[(17)] $r \equiv 3 (\textnormal{mod } 4)$, $r+1=2^{a}b$, where $b$ is odd. $N=s(r-1)+tb-2st$, for any $0 \leq s \leq b$ and even $t$ with $0 \leq t \leq r-1$; (see Theorem \ref{thm15} (i))
  \item[(18)] $r \equiv 3 (\textnormal{mod } 4)$, $r+1=2^{a}b$, where $b$ is odd. $N=s(r-1)+tb-2st+1$, for any $0 \leq s \leq b$ and odd $t$ with $0 \leq t \leq r-1$; (see Theorem \ref{thm15} (ii))
  \item[(19)] $r \equiv 3 (\textnormal{mod } 4)$, $r+1=2^{a}b$, where $b$ is odd. $N=s(r-1)+tb-2st+2$, for any $0 \leq s \leq b$ and even $t$ with $0 \leq t \leq r-1$; (see Theorem \ref{thm15} (iii))
  \item[(20)] $r \equiv 3 (\textnormal{mod } 4)$, $r+1=2^{a}b$, where $b$ is odd. $N=s(r-1)2^{a}+t(r-1)-2st$ or $N=s(r-1)2^{a}+t(r+1)-2st+2$, for any $0 \leq s \leq b$ and $0 \leq t \leq r-1$. (see Theorem \ref{thm16})
\end{description}

\subsection*{Organization of this paper}
 In Section 2, we introduce some basic notations and results about MDS Euclidean self-dual codes. In Section 3.1,  we give a construction of MDS Euclidean self-dual codes via the trace function and subspaces of $\mathbb{F}_{q}$. In Section 3.2, we present the key lemma to ensure that the symmetric difference of two intersecting subsets of finite fields can be taken as the desired evaluation set. Several new classes of MDS Euclidean self-dual codes are then constructed. In Section 4, we conclude this paper.

\section{Preliminaries}
In this section, we introduce some basic notations and results about MDS Euclidean self-dual codes.

Let $q$ be a prime power and $\mathbb{F}_{q}$ be the finite field with $q$ elements. Let $a_{1}, \ldots, a_{n}$ be $n$ distinct elements of $\mathbb{F}_{q}$ and  $v_{1}, \ldots, v_{n}$ of $\mathbb{F}_{q}$ be $n$ nonzero elements of $\mathbb{F}_{q}$. Denote $\textbf{a}= (a_{1}, \ldots, a_{n})$ and $\textbf{v}=(v_{1}, \ldots, v_{n})$. The generalized Reed-Solomon (GRS) code associated to $\textbf{a}$ and $\textbf{v}$ is defined as follows:
\[GRS_{k}(\textbf{a}, \textbf{v})\triangleq \{(v_{1}f(a_{1}), \ldots, v_{n}f(a_{n}))
 : f(x) \in \mathbb{F}_{q}[x], \textnormal{ and }\deg(f(x)) \leq k-1\} .\]
 And the extended GRS code associated to $\textbf{a}$ and $\textbf{v}$ is given as
 \[ GRS_{k}(\textbf{a}, \textbf{v},\infty) \triangleq \{(v_{1}f(a_{1}), \ldots, v_{n}f(a_{n}),f_{k-1})
            :f(x) \in \mathbb{F}_{q}[x], \textnormal{ and }\deg(f(x)) \leq k-1 \},\]
 where $f_{k-1}$ is the coefficient of $x^{k-1}$ in $f(x)$. It is well-known that (extended) GRS codes are MDS codes and so are their dual codes.

Let $\eta(x)$ be the quadratic character of $\mathbb{F}_{q}^{*}$, that is $\eta(x)=1$ if $x$ is a square in $\mathbb{F}_{q}^{*}$ and $\eta(x)=-1$ if $x$ is a non-square in $\mathbb{F}_{q}^{*}$. In this paper, we always assume that $q=r^{2}$, where $r$ is an odd prime power. Let $Tr(x)=x+x^{r}$ be the trace function from $\mathbb{F}_{q}$ to $\mathbb{F}_{r}$.

For any subset $E \subseteq \mathbb{F}_{q}$, we define the polynomial $\pi_{E}(x)$ over $\mathbb{F}_{q}$ as
\[\pi_{E}(x) \triangleq \prod_{e \in E}(x-e).\]
For any element $e \in E$, we define
\[\delta_{E}(e) \triangleq \prod_{e' \in E, e' \neq e}(e-e').\]

The Part (i) of the following lemma was given in \cite[Lemma 3.1 (1)]{ZF19} and the Part (ii) is a direct generalization of \cite[Lemma 3.1 (2)]{ZF19}, which can be proved similarly. So we omit the proof here.
\begin{lemma}\cite[Lemma 3.1]{ZF19}\label{lem1}
\begin{description}
  \item[(i)] Let $E$ be a subset of $\mathbb{F}_{q}$, then for any $e \in E$
  \[\delta_{E}(e)=\pi'_{E}(e),\]
  where $\pi'_{E}(x)$ is the derivative of $\pi_{E}(x)$.
  \item[(ii)] Let $E_{1},E_{2},\dots, E_{\ell}$ be $\ell$ pairwise disjoint subsets of $\mathbb{F}_{q}$, and $E=\bigcup_{i=1}^{\ell}E_{i}$. Then for any $e \in E_{i}$,
      \[\delta_{E}(e)=\delta_{E_{i}}(e)\prod_{1 \leq j \leq \ell, j \neq i}\pi_{E_{j}}(e).\]
\end{description}
\end{lemma}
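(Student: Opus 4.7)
The plan is to prove both parts by direct computation using the product rule for derivatives of polynomials, which works over any field (the formal derivative satisfies $(fg)' = f'g + fg'$).

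For part (i), I would start from the definition $\pi_{E}(x) = \prod_{e' \in E}(x - e')$ and apply the product rule to obtain
\[\pi'_{E}(x) = \sum_{e' \in E} \prod_{e'' \in E,\, e'' \neq e'} (x - e'').\]
Then I would evaluate at $x = e$ for a fixed $e \in E$. In every summand indexed by $e' \neq e$, the product still contains the factor $(x - e)$, which vanishes at $x = e$. Hence only the summand with $e' = e$ survives, giving
\[\pi'_{E}(e) = \prod_{e'' \in E,\, e'' \neq e} (e - e'') = \delta_{E}(e),\]
which is exactly the claim.

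For part (ii), the key observation is that since the $E_i$ are pairwise disjoint with union $E$, we have the factorization $\pi_{E}(x) = \prod_{j=1}^{\ell} \pi_{E_j}(x)$. Applying the product rule to this factorization gives
\[\pi'_{E}(x) = \sum_{j=1}^{\ell} \pi'_{E_j}(x) \prod_{k \neq j} \pi_{E_k}(x).\]
Now fix $e \in E_i$. For every index $j \neq i$, the product $\prod_{k \neq j} \pi_{E_k}(x)$ still contains the factor $\pi_{E_i}(x)$, which vanishes at $x = e$ because $e$ is a root of $\pi_{E_i}$. Therefore only the $j = i$ term survives, yielding
\[\pi'_{E}(e) = \pi'_{E_i}(e) \prod_{k \neq i} \pi_{E_k}(e).\]
Combining this with part (i) applied to both $E$ and $E_i$, i.e.\ $\delta_{E}(e) = \pi'_{E}(e)$ and $\delta_{E_i}(e) = \pi'_{E_i}(e)$, finishes the proof.

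There is no substantive obstacle here; the argument is purely formal and uses only the product rule plus the fact that a root of one factor annihilates every product in which that factor appears. The only minor care needed is to ensure the formal derivative identities are applied correctly over $\mathbb{F}_{q}$, which is standard since the product rule holds for the formal derivative on any polynomial ring.
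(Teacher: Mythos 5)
Your proof is correct, and it is exactly the standard formal-derivative argument that the paper omits and attributes to \cite[Lemma 3.1]{ZF19}: part (i) by the product rule with all but one summand killed at $x=e$, and part (ii) by factoring $\pi_{E}=\prod_{j}\pi_{E_{j}}$ and invoking part (i). (Part (ii) can also be seen even more directly by splitting the defining product $\delta_{E}(e)=\prod_{e'\in E,\,e'\neq e}(e-e')$ over the blocks $E_{j}$, with no derivative needed, but your route is equally valid.)
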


In recent years, there are lots of work on construction of MDS Euclidean self-dual codes by using GRS codes and extended GRS codes. The key point of these constructions is the following two lemmas (or with some equivalent forms), which ensure the existence of MDS Euclidean self-dual codes. The reader may refer to \cite{JX17, Y18, FF19, ZF19, FLLL19, LLL19, FLL19} for more details.
\begin{lemma}\label{lem2}
Let $n$ be even. If there exits a subset $A \subseteq \mathbb{F}_{q}$ of size $n$, such that $\eta\big(\delta_{A}(a)\big)$ are the same for all $a \in A$, then there exists a $q$-ary MDS Euclidean self-dual code of length $n$.
\end{lemma}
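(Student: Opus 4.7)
The plan is to construct the desired code explicitly as a generalized Reed--Solomon code $GRS_{n/2}(\mathbf{a},\mathbf{v})$ where $\mathbf{a}=(a_1,\ldots,a_n)$ is any ordering of the elements of $A$, and to determine the multiplier vector $\mathbf{v}$ from the hypothesis on the quadratic character. Since $GRS$ codes are automatically MDS, the only real work is to arrange that the code coincides with its Euclidean dual, and the dimension $k=n/2$ is forced by the self-dual requirement.

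First I would recall the standard description of the dual of a GRS code: one has $GRS_k(\mathbf{a},\mathbf{v})^{\perp}=GRS_{n-k}(\mathbf{a},\mathbf{u})$, where the dual multipliers are $u_i=\bigl(v_i\,\delta_A(a_i)\bigr)^{-1}$ for each $i$ (this uses Lemma~\ref{lem1}(i), since $\delta_A(a_i)=\pi'_A(a_i)$). Taking $k=n/2$, the dual has the same dimension as the primal, and a natural sufficient condition for equality of the two codes is that $\mathbf{u}=c\,\mathbf{v}$ for some scalar $c\in\mathbb{F}_q^{*}$, because rescaling the multiplier vector by a nonzero constant leaves the GRS code unchanged. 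This translates to the requirement
\[
v_i^{\,2}\;=\;\frac{1}{c\,\delta_A(a_i)}\qquad\text{for every }i,
\]
so the task reduces to exhibiting such a $c$ for which each right-hand side is a square in $\mathbb{F}_q^{*}$.

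Here is where the hypothesis enters. By assumption there is a single sign $\varepsilon\in\{\pm 1\}$ with $\eta\bigl(\delta_A(a)\bigr)=\varepsilon$ for every $a\in A$. I would simply pick any $c\in\mathbb{F}_q^{*}$ with $\eta(c)=\varepsilon$ (such $c$ exists because $\eta$ is surjective onto $\{\pm 1\}$ when $q$ is odd; the case of even $q$ is handled separately since every element is a square, so one can take $c=1$). Then $\eta\bigl(c\,\delta_A(a_i)\bigr)=\varepsilon^{2}=1$ for all $i$, hence each $\bigl(c\,\delta_A(a_i)\bigr)^{-1}$ is a nonzero square and admits a square root $v_i\in\mathbb{F}_q^{*}$. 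With this choice of $\mathbf{v}$, the dual multipliers become $u_i=(v_i\delta_A(a_i))^{-1}=cv_i$, so $GRS_{n/2}(\mathbf{a},\mathbf{v})^{\perp}=GRS_{n/2}(\mathbf{a},c\mathbf{v})=GRS_{n/2}(\mathbf{a},\mathbf{v})$, which is the desired Euclidean self-dual MDS code of length $n$.

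I do not expect a serious obstacle here: the construction is essentially a bookkeeping argument that packages the standard dual formula for GRS codes together with a single application of the quadratic character. The one subtlety worth stating carefully is the invariance $GRS_k(\mathbf{a},c\mathbf{v})=GRS_k(\mathbf{a},\mathbf{v})$ under global rescaling of the multipliers, which lets the identity $\mathbf{u}=c\mathbf{v}$ (rather than the stronger $\mathbf{u}=\mathbf{v}$) already yield self-duality; this is the reason the hypothesis only needs $\eta(\delta_A(a))$ to be constant on $A$ rather than identically equal to $1$.
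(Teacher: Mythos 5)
Your proof is correct. The paper does not actually prove Lemma~\ref{lem2} itself (it defers to the cited references), and your argument --- the dual multiplier formula $u_i=(v_i\,\delta_A(a_i))^{-1}$ for $GRS_{n/2}(\mathbf{a},\mathbf{v})$, together with a global scalar $c$ with $\eta(c)=\varepsilon$ that makes each $(c\,\delta_A(a_i))^{-1}$ a square and is then absorbed by the rescaling invariance $GRS_k(\mathbf{a},c\mathbf{v})=GRS_k(\mathbf{a},\mathbf{v})$ --- is precisely the standard proof given in those references.
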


\begin{lemma}\label{lem3}
Let $n$ be odd. If there exits a subset $A \subseteq \mathbb{F}_{q}$ of size $n$, such that $\eta\big(-\delta_{A}(a)\big)=1$ for all $a \in A$, then there exists a $q$-ary MDS Euclidean self-dual code of length $n+1$.
\end{lemma}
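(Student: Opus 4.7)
The plan is to realize the desired code as an extended GRS code $C = GRS_{(n+1)/2}(\mathbf{a}, \mathbf{v}, \infty)$ of length $n+1$ over $\mathbb{F}_q$. I would enumerate $A = \{a_1, \ldots, a_n\}$ and set $\mathbf{a} = (a_1, \ldots, a_n)$. Since $n$ is odd, $n+1$ is even and $k = (n+1)/2$ is a positive integer, so the dimension of $C$ will be exactly half its length, a necessary prerequisite for self-duality.

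The hypothesis says that $-\delta_A(a_i)$ is a nonzero square in $\mathbb{F}_q$ for every $i$. Since $\eta$ is multiplicative and $\eta(x^{-1})=\eta(x)$ on $\mathbb{F}_q^{*}$, the element $-\delta_A(a_i)^{-1}$ is also a square, so I may choose $v_i \in \mathbb{F}_q^{*}$ with $v_i^2 = -\delta_A(a_i)^{-1}$ and take $\mathbf{v}=(v_1,\ldots,v_n)$. The code $C$ is MDS because extended GRS codes are, so the only remaining task is to verify Euclidean self-duality.

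For that, I would invoke the standard description of the dual of an extended GRS code (of the kind used in \cite{JX17, Y18, FF19}): one has $GRS_k(\mathbf{a}, \mathbf{v}, \infty)^{\perp} = GRS_{n+1-k}(\mathbf{a}, \mathbf{v}', \infty)$, where each $v_i'$ is determined by $v_i^{-1}$ and $\delta_A(a_i)$ via the Lagrange-interpolation formula, and the $\infty$-coordinate is handled through the leading-coefficient slot. With $k=(n+1)/2$ the two extended GRS codes share the same length and dimension, and the prescription $v_i^2 = -\delta_A(a_i)^{-1}$ is arranged precisely so that $\mathbf{v}' = \mathbf{v}$. Hence $C^{\perp}=C$ and $C$ is an $[\,n+1,\,(n+1)/2\,]_q$ MDS Euclidean self-dual code, as claimed.

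The only delicate point is tracking constants---especially the sign and the $\infty$-slot---in the extended-GRS duality identity; the hypothesis $\eta(-\delta_A(a))=1$ is exactly what is required so that the system $v_i^2 = -\delta_A(a_i)^{-1}$ admits a solution in $\mathbb{F}_q^{*}$. Once the duality formula is written out, the rest is a direct coefficient comparison.
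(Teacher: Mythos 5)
Your proposal is correct: the prescription $v_i^{2}=-\delta_{A}(a_i)^{-1}$, combined with the interpolation identity $\sum_{i=1}^{n}h(a_i)/\delta_{A}(a_i)=h_{n-1}$ for $\deg h\leq n-1$ (which is what makes $\mathbf{v}'=\mathbf{v}$ in the extended-GRS duality formula when $k=(n+1)/2$), is exactly the standard argument, and the hypothesis $\eta\big(-\delta_{A}(a)\big)=1$ is precisely the solvability condition for the $v_i$. The paper itself omits the proof of this lemma and defers to \cite{JX17,Y18,FF19}, where the argument is the same as yours, so you have simply (and correctly) supplied the details the paper leaves to the references.
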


\section{New Constructions of MDS Euclidean Self-Dual Codes}
In this section, we will give several new constructions of MDS Euclidean self-dual codes based Lemmas \ref{lem2} and \ref{lem3}. The main idea is to choose different suitable subsets of $\mathbb{F}_{q}$ which satisfy the conditions in Lemmas \ref{lem2} or \ref{lem3}. Throughout this paper, we suppose that $q=r^{2}$ and $r=p^{m}$, where $p$ is an odd prime.

\subsection{MDS Euclidean Self-Dual Codes from Trace Function and Subspaces}
In this subsection, we provide our first construction of MDS Euclidean self-dual codes, which is based on the trace function and a subspace of $\mathbb{F}_{r}$.

Let $1 \leq t \leq r$ and $s$ be even with $0 \leq s \leq p^{m-t'}-1$, where $t'=\lceil\log_{p}(t)\rceil$.  We fix an $\mathbb{F}_{p}$-linear subspace $H \subseteq \mathbb{F}_{r}$ of dimensional $t'$. Then $|H|=p^{t'} \geq t$ and $|\mathbb{F}_{r}/H |=p^{m-t'}>s$. Let $h_{1}=0, h_{2}, \ldots, h_{t}$ be $t$ distinct elements of $H$. Let $b_{0}=0, b_{1}, b_{2}, \ldots, b_{s}$ be $s+1$ distinct representations of $\mathbb{F}_{r}/H$ such that for any $1 \leq i \leq \frac{s}{2}$,
\[b_{i}=-b_{\frac{s}{2}+i}.\]
For any $1 \leq i \leq t$, define
\begin{equation}\label{1}
  T_{i} \triangleq \{x \in \mathbb{F}_{q} : Tr(x)=h_{i} \}.
\end{equation}
Then $|T_{i}|=r$ and $T_{i} \bigcap T_{j}= \emptyset$ for any $i \neq j$.
\vskip 1mm \noindent
For any $0 \leq j \leq s$, define
\begin{equation}\label{2}
  H_{j} \triangleq \{b_{j}+h: h \in H \}.
\end{equation}
Then each $H_j$ is a subset of $\mathbb{F}_{r}$ with $|H_{j}|=p^{t'}$ and $H_{i} \bigcap H_{j}= \emptyset$ for any $0 \leq i \neq j \leq s$.
\begin{lemma}\label{lem4}
With the above notations,
\begin{description}
  \item[(i)] $\pi_{T_{i}}(x)=Tr(x)-h_{i}=x+x^{r}-h_{i}$, and  $\pi'_{T_{i}}(x)=1.$
  \item[(ii)] For any $1 \leq i \leq t$ and $1 \leq j \leq s$,
  \[T_{i} \cap H_{j}= \emptyset,\]
  and
  \[T_{i} \cap H_{0}= \{\frac{h_{i}}{2} \}.\]
\end{description}
\end{lemma}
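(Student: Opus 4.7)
For part (i), the plan is to identify $\pi_{T_i}(x)$ with the polynomial $x + x^r - h_i$ by a degree/root-count argument. The polynomial $f(x) = x + x^r - h_i$ is monic of degree $r$, and by definition its set of roots in $\overline{\mathbb{F}_q}$ is contained in $T_i$. Since $|T_i| = r$ and $\pi_{T_i}(x)$ is also monic of degree $r$ with root set exactly $T_i$, the two monic polynomials must coincide. Differentiating then gives $\pi'_{T_i}(x) = 1 + rx^{r-1}$, and since $r = p^m$ with $p$ the characteristic, the second term vanishes, leaving $\pi'_{T_i}(x) = 1$.

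For part (ii), the key observation is that $H_j = b_j + H \subseteq \mathbb{F}_r$ for every $0 \leq j \leq s$, because $b_j, H \subseteq \mathbb{F}_r$. Hence for any $x \in H_j$ we have $x^r = x$, so $\mathrm{Tr}(x) = 2x$. If moreover $x \in T_i$, then $\mathrm{Tr}(x) = h_i$ forces $x = h_i/2$; note that $h_i/2$ is well defined because $p$ is odd. Since $h_i \in H$ and $H$ is an $\mathbb{F}_p$-subspace, $h_i/2 \in H = H_0$.

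It then remains to translate this into the cosets $H_j$. If $j \geq 1$, the coset $H_j = b_j + H$ is distinct from $H_0 = H$ by the choice of the $b_j$ as representatives of distinct cosets in $\mathbb{F}_r/H$, and $h_i/2 \in H$ cannot lie in $H_j$; this gives $T_i \cap H_j = \emptyset$. For $j = 0$, the element $h_i/2 \in H_0$ does satisfy $\mathrm{Tr}(h_i/2) = h_i$, so it lies in $T_i \cap H_0$, and by the first argument it is the unique such element, yielding $T_i \cap H_0 = \{h_i/2\}$.

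No step presents a real obstacle; the only subtleties are keeping track of the characteristic so that both $rx^{r-1} = 0$ and $1/2 \in \mathbb{F}_p$ are legitimate, and using that the $b_j$'s are distinct modulo $H$ to rule out $h_i/2 \in H_j$ when $j \neq 0$.
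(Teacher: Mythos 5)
Your proof is correct and follows essentially the same route as the paper: the paper disposes of part (i) as immediate from the definitions (your monic degree-$r$ root-count argument is just the standard way of making that precise), and its part (ii) is exactly your observation that $x\in T_i\cap\mathbb{F}_r$ forces $2x=h_i$, hence $x=h_i/2\in H_0$, with the coset disjointness $H_0\cap H_j=\emptyset$ finishing the case $j\geq 1$.
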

\begin{proof}
\begin{description}
  \item[(i)] The conclusions follow immediately from the definitions.
  \item[(ii)] Let $x \in T_{i}\cap \mathbb{F}_{r}$, then $x+x^{r}=x+x=h_{i}$, i.e., $x=\frac{h_{i}}{2} \in H_{0}$. The conclusions then follow from $H_{0} \bigcap H_{j}= \emptyset$.
\end{description}
\end{proof}

Based on \eqref{1} and \eqref{2}, we give our first construction as follows.
\begin{theorem}\label{thm1}
Suppose $q=r^{2}$ and $r=p^{m}$. For any even $t$ and even $s$ with $1 \leq t \leq r$ and $0 \leq s \leq p^{m-t'}-1$, let $n=tr+sp^{t'}$, where $t'=\lceil\log_{p}(t)\rceil$, then there exists a $q$-ary MDS Euclidean self-dual code of length $n$.
\end{theorem}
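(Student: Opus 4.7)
The plan is to verify the hypotheses of Lemma~\ref{lem2} for the evaluation set
\[
A \;=\; \bigcup_{i=1}^{t}T_i \;\cup\; \bigcup_{j=1}^{s}H_j.
\]
By Lemma~\ref{lem4}(ii) and the constructions of the $T_i$ and $H_j$, these $t+s$ blocks are pairwise disjoint, so $|A|=tr+sp^{t'}=n$. It will then suffice to show $\eta\bigl(\delta_A(a)\bigr)=1$ for every $a\in A$, after which Lemma~\ref{lem2} produces the desired code.

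Three background facts will do most of the work. First, $q=r^2\equiv 1\pmod 4$ since $r$ is odd, so $\eta(-1)=1$. Second, since $r+1$ is even, every element of $\mathbb{F}_r^*$ is a norm $\alpha^{r+1}$, hence a square, in $\mathbb{F}_q^*$; in particular $\eta(x)=1$ for every $x\in\mathbb{F}_r^*$. Third, the polynomial $L_H(x):=\prod_{h\in H}(x-h)$ is $\mathbb{F}_p$-linear on $\mathbb{F}_q$, vanishes exactly on $H$, and has nonzero constant derivative $c_0\in\mathbb{F}_r^*$ (its $x$-coefficient).

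Using Lemma~\ref{lem1}(ii) I would decompose $\delta_A(a)$ according to the block containing $a$. For $a\in H_j$ with $j\ge 1$, every factor, namely $\delta_{H_j}(a)=c_0$, $\pi_{H_{j'}}(a)=L_H(b_j-b_{j'})$ for $j'\ne j$, and $\pi_{T_i}(a)=2a-h_i$, lies in $\mathbb{F}_r^*$ (non-vanishing by Lemma~\ref{lem4}(ii) and the choice of coset representatives), so $\eta(\delta_A(a))=1$. For $a\in T_i$, Lemma~\ref{lem4}(i) gives $\delta_{T_i}(a)=1$ and $\pi_{T_{i'}}(a)=h_i-h_{i'}\in\mathbb{F}_r^*$ for $i'\ne i$, so the only nontrivial contribution is
\[
P(a)\;=\;\prod_{j=1}^{s}\bigl(L_H(a)-L_H(b_j)\bigr).
\]
Here I would use the pairing $b_{s/2+j}=-b_j$ (which needs $s$ even) together with additivity of $L_H$ to fold this into $\prod_{j=1}^{s/2}\bigl(L_H(a)^2-L_H(b_j)^2\bigr)$.

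The main obstacle, and the only real computation, is the identity $L_H(a)^r=-L_H(a)$ for $a\in T_i$. I would prove it by raising the product defining $L_H(a)$ to the $r$-th power, using $h^r=h$ on $H\subseteq\mathbb{F}_r$, and then applying the substitution $h\mapsto h_i-h$, which is a bijection of $H$ because $h_i\in H$; together with $a^r=h_i-a$ this collapses to $(-1)^{|H|}L_H(a)=-L_H(a)$, since $|H|=p^{t'}$ is odd. Consequently $L_H(a)^2\in\mathbb{F}_r$, and because $\ker(Tr)\cap\mathbb{F}_r=\{0\}$, the factor $L_H(a)^2-L_H(b_j)^2$ can vanish only when $L_H(a)=0$, i.e., only at the single point $a=h_i/2\in T_i$; at that exceptional point $P(a)=(-1)^{s/2}\prod_{j=1}^{s/2}L_H(b_j)^2$ is a square by the first fact. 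Everywhere else each $L_H(a)^2-L_H(b_j)^2$ sits in $\mathbb{F}_r^*$, so $\eta(P(a))=1$, which gives $\eta(\delta_A(a))=1$ throughout $A$ and finishes the argument via Lemma~\ref{lem2}.
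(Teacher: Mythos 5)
Your proof is correct and follows essentially the same route as the paper: the same evaluation set $A=\bigcup_i T_i\cup\bigcup_{j\ge 1}H_j$, the same reduction to Lemma~\ref{lem2}, and the same key mechanism of combining $a^r=h_i-a$ with the antisymmetric choice $b_{s/2+j}=-b_j$ to force the $H$-part of $\delta_A(a)$ into $\mathbb{F}_r$, whose elements are squares in $\mathbb{F}_q$. Your packaging via the linearized polynomial $L_H$ and the identity $L_H(a)^r=-L_H(a)$ (pairing factors into $L_H(a)^2-L_H(b_j)^2$) is only a cosmetic variant of the paper's direct computation $\pi_{H_j}(a)^r=-\pi_{H_{s/2+j}}(a)$.
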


\begin{proof}
Let $T_{i}$ and $H_{j}$ be defined as \eqref{1} and \eqref{2}, respectively. Define
\[A=(\bigcup_{i=1}^{t}T_{i})\cup (\bigcup_{j=1}^{s}H_{j}).\]
From Lemma \ref{lem4} (ii), these $T_{i}$ and $H_{j}$ are pairwise disjoint. Thus $|A|=tr+sp^{t'}=n$.
We begin to calculate $\delta_{A}(a)$ for any $a \in A$.
\vskip 2mm
\textbf{i)} If $a \in T_{i}$ for some $i$, then by Lemma \ref{lem1} and Lemma \ref{lem4} (i),
\begin{eqnarray*}
% \nonumber to remove numbering (before each equation)
  \delta_{A}(a) &=& \delta_{T_{i}}(a)\left(\prod_{\ell \neq i,\ell=1}^{t}\pi_{T_{\ell}}(a)\right)\left(\prod_{j=1}^{s}\pi_{H_{j}}(a)\right) \\
   &=& \pi'_{T_{i}}(a)\left(\prod_{\ell \neq i,\ell=1}^{t}\pi_{T_{\ell}}(a)\right)\left(\prod_{j=1}^{s}\pi_{H_{j}}(a)\right) \\
   &=& \left(\prod_{\ell \neq i,\ell=1}^{t}(Tr(a)-h_{\ell})\right)\left(\prod_{j=1}^{s}\pi_{H_{j}}(a)\right).
\end{eqnarray*}
 Note that $\pi_{H_{j}}(a)=\prod_{\xi \in H_{j}}(a-\xi)$ and $a^{r}+a=h_{i}$, thus
  \begin{eqnarray*}
  % \nonumber to remove numbering (before each equation)
    \pi_{H_{j}}^{r}(a) &=& \prod_{\xi \in H_{j}}(a^{r}-\xi) \\
                       &=& \prod_{\xi \in H_{j}}(h_{i}-a-\xi) \\
                       &=& (-1)^{|H_{j}|}\prod_{\xi \in H_{j}}\big(a-(h_{i}-\xi)\big).
  \end{eqnarray*}
 Note that $b_{i}=-b_{\frac{s}{2}+i}$, thus $h_{i}-\xi$ runs over $H_{\frac{s}{2}+j}$ when $\xi$ runs over $H_{j}$. Hence
   \begin{eqnarray*}
  % \nonumber to remove numbering (before each equation)
    \pi_{H_{j}}^{r}(a) &=& -\prod_{\xi \in H_{\frac{s}{2}+j}}(a-\xi) \\
                       &=& -\pi_{H_{\frac{s}{2}+j}}(a).
  \end{eqnarray*}

 Therefore,
  \[\big(\prod_{j=1}^{s}\pi_{H_{j}}(a)\big)^{r}=(-1)^{s}\prod_{j=1}^{s}\pi_{H_{j}}(a)=\prod_{j=1}^{s}\pi_{H_{j}}(a).\]
  Hence $\prod_{j=1}^{s}\pi_{H_{j}}(a) \in \mathbb{F}_{r}$. And each $Tr(a)-h_{i} \in \mathbb{F}_{r}$, thus $\delta_{A}(a) \in \mathbb{F}_{r}$.
  \vskip 2mm
\textbf{ii)} If $a \in H_{j}$ for some $j$, then by Lemma \ref{lem1} and Lemma \ref{lem4} (i) again,
  \[\delta_{A}(a)=\big(\prod_{i=1}^{t}\pi_{T_{i}}(a)\big)\cdot\pi'_{H_{j}}(a)\prod_{\ell \neq j, \ell=1}^{s}\pi_{H_{\ell}}(a).\]
  Since each $H_{\ell} \subseteq \mathbb{F}_{r}$ and $a \in H_{j}$, we have $\pi'_{H_{j}}(a), \pi_{H_{\ell}}(a) \in \mathbb{F}_{r}$,
  thus $\delta_{A}(a) \in \mathbb{F}_{r}$.

  In a word, we have $\delta_{A}(a) \in \mathbb{F}_{r}$ for all $a \in A$. Thus $\eta\big(\delta_{A}(a)\big)=1$ since each element of $\mathbb{F}_{r}$ is a square in $\mathbb{F}_{q}$. The theorem then follows from Lemma \ref{lem2}.
\end{proof}

\begin{example}
In Theorem \ref{thm1}, let $p=5$, $r=p^{2}=25$, $q=r^{2}=625$, $t=4$. Then $t'=\lceil\log_{p}(t)\rceil=1$. Let $s=2$, then $n=tr+sp=110$. So we can obtain an MDS Euclidean self-dual code of length 110 over $\mathbb{F}_{625}$. The parameters of the code are new in the sense that they have not been obtained in the literature (see Table I).
\end{example}
Based on the construction of Theorem \ref{thm1}, we can provide the following theorem by Lemma \ref{lem3}.
\begin{theorem}\label{thm2}
Suppose $q=r^{2}$ and $r=p^{m}$. For any odd $t$ and even $s$ with $1 \leq t \leq r$ and $0 \leq s \leq p^{m-\lceil\log_{p}(t)\rceil}-1$, let $n=tr+sp^{\lceil\log_{p}(t)\rceil}$, then there exists a $q$-ary MDS Euclidean self-dual code of length $n+1$.
\end{theorem}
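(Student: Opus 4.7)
The plan is to reuse the evaluation set of Theorem \ref{thm1} verbatim and substitute Lemma \ref{lem3} for Lemma \ref{lem2}. Specifically, with $t'=\lceil\log_{p}(t)\rceil$ and with $T_{i}$, $H_{j}$ as in \eqref{1} and \eqref{2}, I set
\[A=\left(\bigcup_{i=1}^{t}T_{i}\right)\cup\left(\bigcup_{j=1}^{s}H_{j}\right),\]
which, by Lemma \ref{lem4}(ii), is a disjoint union of size $|A|=tr+sp^{t'}=n$. Because $t$ and $r$ are odd while $s$ is even, $n$ is odd; this is precisely why Lemma \ref{lem3} (rather than Lemma \ref{lem2}) is the right tool, and a successful application will yield a Euclidean self-dual code of length $n+1$.

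The next step is to show $\delta_{A}(a)\in\mathbb{F}_{r}$ for every $a\in A$. The entire case analysis in the proof of Theorem \ref{thm1}—the pairing $b_{j}\leftrightarrow b_{s/2+j}$ that yields $\pi_{H_{j}}^{r}(a)=-\pi_{H_{s/2+j}}(a)$ (using that $|H_{j}|=p^{t'}$ is odd), and the subsequent identity $(\prod_{j=1}^{s}\pi_{H_{j}}(a))^{r}=(-1)^{s}\prod_{j=1}^{s}\pi_{H_{j}}(a)=\prod_{j=1}^{s}\pi_{H_{j}}(a)$ when $a\in T_{i}$, together with the observation that every factor of $\delta_{A}(a)$ lies in $\mathbb{F}_{r}$ when $a\in H_{j}\subseteq\mathbb{F}_{r}$—goes through unchanged, since the only parity hypothesis actually invoked is that $s$ is even. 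I would therefore just quote this part of the earlier proof. Because every nonzero element of $\mathbb{F}_{r}$ is a norm from $\mathbb{F}_{q}=\mathbb{F}_{r^{2}}$, it is a square in $\mathbb{F}_{q}$, so $\eta(\delta_{A}(a))=1$ for all $a\in A$.

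To finish, I must upgrade $\eta(\delta_{A}(a))=1$ to $\eta(-\delta_{A}(a))=1$. Since $r$ is odd, $q=r^{2}\equiv 1\pmod{4}$, hence $-1$ is a square in $\mathbb{F}_{q}$ and $\eta(-1)=1$; multiplicativity of $\eta$ gives $\eta(-\delta_{A}(a))=\eta(-1)\eta(\delta_{A}(a))=1$ for every $a\in A$. Lemma \ref{lem3} then produces the required $q$-ary MDS Euclidean self-dual code of length $n+1$. I do not anticipate a genuine obstacle here; the only things demanding care are the parity bookkeeping that validates the use of Lemma \ref{lem3}, and the essentially trivial observation that $\eta(-1)=1$ whenever $q$ is a square of an odd prime power.
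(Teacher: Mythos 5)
Your proposal is correct and follows essentially the same route as the paper: reuse the set $A$ and the computation from Theorem \ref{thm1} to get $\delta_{A}(a)\in\mathbb{F}_{r}$, note $n$ is odd, and apply Lemma \ref{lem3}. The only (harmless) difference is that you justify $\eta(-\delta_{A}(a))=1$ via $\eta(-1)=1$ and multiplicativity, whereas the paper simply observes that $-\delta_{A}(a)\in\mathbb{F}_{r}$ is automatically a square in $\mathbb{F}_{q}$.
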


\begin{proof}
Let $A$ be defined as in the proof of Theorem \ref{thm1}. With the same argument of Theorem \ref{thm1}, we can still prove that $\delta_{A}(a) \in \mathbb{F}_{r}$ for all $a \in A$. Hence $\eta(-\delta_{A}(a))=1$. Note that $t$ is odd and $s$ is even, hence $n$ is odd. The theorem then follows from Lemma \ref{lem3}.
\end{proof}

\begin{example}
In Theorem \ref{thm2}, let $p=3$, $r=p^{3}=27$, $q=r^{2}=729$, $t=9$. Then $\lceil\log_{p}(t)\rceil=1$. Let $s=8$, then $n=tr+sp=267$. So we can obtain an MDS Euclidean self-dual code of length $268$ over $\mathbb{F}_{729}$. The parameters of the code are new in the sense that they have not been obtained in the literature (see Table I).
\end{example}

\subsection{MDS Euclidean Self-Dual Codes from Two Intersecting Subsets}
In this subsection, we give a sufficient condition under which the symmetric difference of two intersecting sets satisfies the conditions in Lemmas \ref{lem2} or \ref{lem3}.

For any two sets $A$ and $B$,  their \emph{difference }is defined as
\[A\setminus B \triangleq \{a \mid a \in A \textnormal{ and } a \notin B \}, \]
and their \emph{symmetric difference} is defined as
\[A \triangle B \triangleq (A\cup B)\setminus(A\cap B)=(A\setminus B)\cup(B\setminus A).\]
It is obviously that $|A\triangle B|=|A|+|B|-2|A\cap B|.$
\vskip 2mm
The following lemma is the key lemma for our next constructions.

\begin{lemma}\label{lem5}
Fix a constant $c \in \{1, -1\} \subseteq \mathbb{F}_{q}$ and let $A, B$ be two subsets of $\mathbb{F}_{q}$. If for all $a \in A\setminus B$ and $b \in B\setminus A$,
\[\eta\big(\delta_{A}(a)\pi_{B}(a)\big)=\eta\big(\pi_{A}(b)\delta_{B}(b)\big)=c,\]
 then,
for all $e \in A\triangle B$,
\[\eta\big(\delta_{A \triangle B}(e)\big)=c.\]
\end{lemma}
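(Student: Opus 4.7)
The plan is to reduce everything to the disjoint decompositions $A = (A\setminus B)\sqcup(A\cap B)$ and $B = (B\setminus A)\sqcup(A\cap B)$, so that $A\triangle B = (A\setminus B)\sqcup(B\setminus A)$ is also a disjoint union. Write $C = A\cap B$, $A' = A\setminus B$, $B' = B\setminus A$. Every quantity in the hypothesis and conclusion can then be expanded with Lemma \ref{lem1}(ii), and the factors coming from $C$ will enter only through squares.

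First I would treat an element $e \in A'$. By Lemma \ref{lem1}(ii) applied to $A = A'\sqcup C$, I have $\delta_{A}(e) = \delta_{A'}(e)\,\pi_{C}(e)$, and from the definition $\pi_{B}(e) = \pi_{B'}(e)\,\pi_{C}(e)$. Multiplying these yields
\[
\delta_{A}(e)\,\pi_{B}(e) \;=\; \delta_{A'}(e)\,\pi_{B'}(e)\,\pi_{C}(e)^{2}.
\]
Since $e \notin C$, the factor $\pi_{C}(e)$ is a nonzero element of $\mathbb{F}_{q}$, so $\pi_{C}(e)^{2}$ is a square and invisible to $\eta$. Applying Lemma \ref{lem1}(ii) again, this time to the disjoint decomposition $A\triangle B = A'\sqcup B'$, gives $\delta_{A\triangle B}(e) = \delta_{A'}(e)\,\pi_{B'}(e)$. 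Hence
\[
\eta\bigl(\delta_{A\triangle B}(e)\bigr) \;=\; \eta\bigl(\delta_{A'}(e)\,\pi_{B'}(e)\bigr) \;=\; \eta\bigl(\delta_{A}(e)\,\pi_{B}(e)\bigr) \;=\; c.
\]

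The case $e \in B'$ is entirely symmetric: I would write $\delta_{B}(e) = \delta_{B'}(e)\,\pi_{C}(e)$ and $\pi_{A}(e) = \pi_{A'}(e)\,\pi_{C}(e)$, multiply to get $\pi_{A}(e)\,\delta_{B}(e) = \pi_{A'}(e)\,\delta_{B'}(e)\,\pi_{C}(e)^{2}$, and identify $\delta_{A\triangle B}(e) = \pi_{A'}(e)\,\delta_{B'}(e)$, so that the hypothesis on $b \in B\setminus A$ yields $\eta(\delta_{A\triangle B}(e)) = c$.

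Honestly, there is no serious obstacle here: the whole lemma is a bookkeeping identity, and the only thing one has to notice is that the common intersection $C$ contributes the \emph{same} factor $\pi_{C}$ to both $\delta_{A}(a)$ and $\pi_{B}(a)$ (resp.\ to $\pi_{A}(b)$ and $\delta_{B}(b)$), producing a square. The minor care needed is to verify $\pi_{C}(e) \neq 0$ when $e \in A\triangle B$, which is automatic because $A\triangle B$ and $C$ are disjoint by construction.
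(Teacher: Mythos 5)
Your proof is correct and follows essentially the same route as the paper: the authors likewise apply Lemma \ref{lem1}(ii) to $A\triangle B=(A\setminus B)\sqcup(B\setminus A)$ and arrive at the identity $\delta_{A\triangle B}(e)=\delta_{A}(e)\pi_{B}(e)/\pi_{A\cap B}^{2}(e)$, which is your multiplicative identity written as a quotient. The observation that the intersection contributes only the square $\pi_{A\cap B}(e)^{2}$, hence is invisible to $\eta$, is exactly the paper's point.
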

\begin{proof}
When $e \in A\setminus B$, then by Lemma \ref{lem1} and definitions,
\begin{eqnarray*}
% \nonumber to remove numbering (before each equation)
  \delta_{A \triangle B}(e) &=& \delta_{A\setminus B}(e)\pi_{B\setminus A}(e) \\
    &=& \prod_{a \in A\setminus B, a \neq e}(e-a)\prod_{b \in B\setminus A}(e-b) \\
    &=& \frac{\prod_{a \in A, a \neq e}(e-a)}{\prod_{e' \in A \cap B}(e-e')}\prod_{b \in B\setminus A}(e-b)  \\
    &=& \frac{\prod_{a \in A, a \neq e}(e-a)}{\prod_{e' \in A \cap B}(e-e')^{2}}\prod_{b \in B}(e-b)  \\
    &=&  \frac{\delta_{A}(e)\pi_{B}(e)}{\pi^{2}_{A\cap B}(e)}.
\end{eqnarray*}
Hence, $\eta\big(\delta_{A \triangle B}(e)\big)=\eta\big(\delta_{A}(e)\pi_{B}(e)\big)=c$. Similarly, it holds for $e \in B\setminus A$. The lemma is proved.
\end{proof}

According to Lemma \ref{lem5}, we provide the following construction based on the two sets given in \eqref{1} and \eqref{2}.
\begin{theorem}\label{thm3}
Suppose $q=r^{2}$ and $r=p^{m}$ with $r \equiv 3 (\textnormal{mod } 4)$. For any odd $t$ and even $s$ with $1 \leq t \leq r$ and $0 \leq s \leq p^{m-\lceil\log_{p}(t)\rceil}-1$, let $n=tr+(s+1)p^{\lceil\log_{p}(t)\rceil}-2t$, then there exists a $q$-ary MDS Euclidean self-dual code of length $n$.
\end{theorem}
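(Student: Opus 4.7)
The plan is to apply Lemma \ref{lem5} to two intersecting sets built from the same ingredients as in Theorem \ref{thm1}, and then appeal to Lemma \ref{lem2}. Concretely, let
\[
A_{1} \triangleq \bigcup_{i=1}^{t} T_{i}, \qquad A_{2} \triangleq \bigcup_{j=0}^{s} H_{j},
\]
with $T_i$ and $H_j$ as in \eqref{1}, \eqref{2}. By Lemma \ref{lem4}(ii), $T_{i} \cap H_{0} = \{h_{i}/2\}$ and $T_{i} \cap H_{j} = \emptyset$ for $j \geq 1$, so $A_{1} \cap A_{2} = \{h_{1}/2, \ldots, h_{t}/2\}$ has exactly $t$ elements. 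Therefore $|A_{1} \triangle A_{2}| = tr + (s+1)p^{t'} - 2t = n$, and since $t$ is odd, $s$ is even, and $r$, $p^{t'}$ are odd, $n$ is even. The goal is to take $A = A_{1} \triangle A_{2}$ and to show $\eta(\delta_{A}(e)) = 1$ for every $e \in A$.

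To invoke Lemma \ref{lem5} with $c = 1$, I would verify the two character conditions. First, fix $a \in A_{1} \setminus A_{2}$, say $a \in T_{i}$ with $a \neq h_{i}/2$. By Lemma \ref{lem1} and Lemma \ref{lem4}(i),
\[
\delta_{A_{1}}(a) \;=\; \pi'_{T_{i}}(a)\prod_{\ell \neq i}\pi_{T_{\ell}}(a) \;=\; \prod_{\ell \neq i}(h_{i} - h_{\ell}) \;\in\; \mathbb{F}_{r}^{*},
\]
which is automatically a square in $\mathbb{F}_{q} = \mathbb{F}_{r^{2}}$. For $\pi_{A_{2}}(a) = \pi_{H_{0}}(a) \prod_{j=1}^{s}\pi_{H_{j}}(a)$, the pairing trick used in Theorem \ref{thm1} (with $b_{s/2+j} = -b_{j}$ and $h_{i} \in H$) gives $\pi_{H_{j}}^{r}(a) = -\pi_{H_{s/2+j}}(a)$, so $\pi_{H_{j}}(a)\pi_{H_{s/2+j}}(a) \in \mathbb{F}_{r}^{*}$ and the full product $\prod_{j=1}^{s}\pi_{H_{j}}(a)$ lies in $\mathbb{F}_{r}^{*}$, hence is a square in $\mathbb{F}_{q}$. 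The same coset argument applied to $H_{0} = H$ yields $\pi_{H_{0}}^{r}(a) = (-1)^{p^{t'}} \pi_{H_{0}}(a) = -\pi_{H_{0}}(a)$, so $\alpha \triangleq \pi_{H_{0}}(a)$ satisfies $\alpha^{r-1} = -1$. Then
\[
\alpha^{(q-1)/2} \;=\; (\alpha^{r-1})^{(r+1)/2} \;=\; (-1)^{(r+1)/2} \;=\; 1,
\]
where the last equality uses $r \equiv 3 \pmod 4$ to ensure $(r+1)/2$ is even. Thus $\eta(\pi_{H_{0}}(a)) = 1$, and combining everything $\eta(\delta_{A_{1}}(a)\pi_{A_{2}}(a)) = 1$.

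Next, for $b \in A_{2} \setminus A_{1}$ (so $b \in H_{j} \subseteq \mathbb{F}_{r}$ with $2b \neq h_{i}$ for all $i$), both $\delta_{A_{2}}(b) = \pi'_{H_{j}}(b)\prod_{\ell \neq j}\pi_{H_{\ell}}(b)$ and $\pi_{A_{1}}(b) = \prod_{i=1}^{t}(2b - h_{i})$ lie in $\mathbb{F}_{r}^{*}$, hence $\eta(\pi_{A_{1}}(b)\delta_{A_{2}}(b)) = 1$. Therefore Lemma \ref{lem5} gives $\eta(\delta_{A}(e)) = 1$ for every $e \in A = A_{1} \triangle A_{2}$, and Lemma \ref{lem2} produces the desired MDS Euclidean self-dual code of length $n$.

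The routine part is the Galois pairing computations showing $\prod_{j=1}^{s}\pi_{H_{j}}(a) \in \mathbb{F}_{r}$ (already carried out in Theorem \ref{thm1}). The genuinely new and delicate step, and the place where the hypothesis $r \equiv 3 \pmod 4$ is indispensable, is the quadratic character of $\pi_{H_{0}}(a)$: the argument reduces to showing $(-1)^{(r+1)/2} = 1$, which fails when $r \equiv 1 \pmod 4$. This is the main obstacle, and it explains why Theorem \ref{thm3} imposes the congruence on $r$ that was absent in Theorems \ref{thm1} and \ref{thm2}.
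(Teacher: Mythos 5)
Your proposal is correct and follows essentially the same route as the paper: the same two sets $A=\bigcup_{i=1}^{t}T_{i}$ and $B=\bigcup_{j=0}^{s}H_{j}$, the same intersection count, Lemma \ref{lem5} with $c=1$, and the same key observation that the $H$-product is an $(r-1)$-st root of $-1$ and hence a square because $(r+1)/2$ is even when $r\equiv 3\pmod 4$. The only cosmetic difference is that you factor out $\pi_{H_{0}}(a)$ and treat it separately, whereas the paper applies the root-of-$(-1)$ argument to the full product $\prod_{j=0}^{s}\pi_{H_{j}}(a)$ at once.
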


\begin{proof}
Let
\[A=\bigcup_{i=1}^{t}T_{i}, B=\bigcup_{j=0}^{s}H_{j},\]
then according to Lemma \ref{lem4},
\[A \cap B= \{\frac{h_{1}}{2}, \frac{h_{2}}{2}, \ldots, \frac{h_{t}}{2}\},\]
hence $|A \cap B|=t$ and $|A\bigtriangleup B|=|A|+|B|-2|A \cap B|=tr+(s+1)p^{\lceil\log_{p}(t)\rceil}-2t=n$.

For any $a \in A\setminus B$, suppose $a \in T_{i}$ for some $i$. Then
\[\delta_{A}(a)\pi_{B}(a)=\pi'_{T_{i}}(a)\prod_{\ell \neq i, \ell=1}^{t}\pi_{T_{\ell}}(a)\prod_{j=0}^{s}\pi_{H_{j}}(a)=\prod_{\ell \neq i, \ell=1}^{t}\pi_{T_{\ell}}(a)\prod_{j=0}^{s}\pi_{H_{j}}(a).\]
Similarly, we can show that
\[\big(\prod_{j=0}^{s}\pi_{H_{j}}(a)\big)^{r}=(-1)^{s+1}\prod_{j=0}^{s}\pi_{H_{j}}(a)=-\prod_{j=0}^{s}\pi_{H_{j}}(a),\]
i.e.,
\[\big(\prod_{j=0}^{s}\pi_{H_{j}}(a)\big)^{r-1}=-1.\]
Then there exists an odd integer $k$ and a primitive element $\omega$ of $\mathbb{F}_{q}$, such that
\[\prod_{j=0}^{s}\pi_{H_{j}}(a)=\omega^{\frac{(r+1)}{2}k}.\]
Since $r \equiv 3 (\textnormal{mod } 4)$, $\frac{r+1}{2}$ is even, thus $\prod_{j=0}^{s}\pi_{H_{j}}(a)$ is a square in $\mathbb{F}_{q}$. Note that $\pi_{T_{\ell}}(a) \in \mathbb{F}_{r}$, which is also a square in $\mathbb{F}_{q}$. Thus $\eta\big(\delta_{A}(a)\pi_{B}(a)\big)=1$.

For any $b \in B\setminus A$, suppose $b \in H_{j}$ for some $j$. Then
\[\pi_{A}(b)\delta_{B}(b)=\prod_{i=1}^{t}\pi_{T_{i}}(b)\pi'_{H_{j}}(b)\prod_{\ell \neq j, \ell=0}^{s}\pi_{H_{j}}(b) \in \mathbb{F}_{r}.\]
Hence $\eta\big(\pi_{A}(b)\delta_{B}(b)\big)=1$. By Lemma \ref{lem5}, for all $e \in A\triangle B$,
\[\eta\big(\delta_{A \triangle B}(e)\big)=1.\]
The theorem then follows from Lemma \ref{lem2}.
\end{proof}

\begin{example}
In Theorem \ref{thm3}, let $p=r=23$, $q=r^{2}=529$, $t=5$. Then $t'=\lceil\log_{p}(t)\rceil=1$. Let $s=0$, then $n=tr+(s+1)p-2t=128$. So we can obtain an MDS Euclidean self-dual code of length $128$ over $\mathbb{F}_{529}$. The parameters of the code are new in the sense that they have not been obtained in the literature (see Table I).
\end{example}

Similarly, based on Theorem \ref{thm3} and Lemma \ref{lem3}, we can obtain the following theorem.

\begin{theorem}\label{thm4}
Suppose $q=r^{2}$ and $r=p^{m}$ with $r \equiv 3 (\textnormal{mod } 4)$. For any even $t$ and even $s$ with $1 \leq t \leq r$ and $0 \leq s \leq p^{m-\lceil\log_{p}(t)\rceil}-1$, let $n=tr+(s+1)p^{\lceil\log_{p}(t)\rceil}-2t$, then there exists a $q$-ary MDS Euclidean self-dual code of length $n+1$.
\end{theorem}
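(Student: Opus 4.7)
The plan is to mimic the construction and argument of Theorem \ref{thm3} verbatim and then invoke Lemma \ref{lem3} in place of Lemma \ref{lem2}, with one very small additional check concerning a sign. Concretely, I will again take
\[A = \bigcup_{i=1}^{t} T_{i}, \qquad B = \bigcup_{j=0}^{s} H_{j},\]
with the $T_{i}$ and $H_{j}$ defined by \eqref{1} and \eqref{2} (using now an even number $t$ of cosets). Lemma \ref{lem4}(ii) still gives $A \cap B = \{h_{1}/2, \dots, h_{t}/2\}$, so $|A \cap B| = t$ and hence $|A \triangle B| = tr + (s+1)p^{\lceil \log_{p}(t)\rceil} - 2t = n$.

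First I would check the parity of $n$. With $t$ and $s$ both even, $tr - 2t$ is even while $(s+1)p^{\lceil \log_{p}(t)\rceil}$ is odd (because $s+1$ is odd and $p$ is odd), so $n$ is odd, which is exactly the parity required to apply Lemma \ref{lem3} (yielding a self-dual code of length $n+1$). Next, I would repeat the two $\eta$-computations of Theorem \ref{thm3} verbatim: for any $a \in A \setminus B$ lying in some $T_{i}$, the factor $\prod_{\ell \neq i}\pi_{T_{\ell}}(a)$ lies in $\mathbb{F}_{r}$ and is therefore a square in $\mathbb{F}_{q}$, while the same Frobenius computation as before shows $\bigl(\prod_{j=0}^{s}\pi_{H_{j}}(a)\bigr)^{r-1} = (-1)^{s+1} = -1$ (since $s$ is even), so combined with $r \equiv 3 \pmod 4$ the product is a square. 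For $b \in B \setminus A$, $\pi_{A}(b)\delta_{B}(b) \in \mathbb{F}_{r}$ since $b \in \mathbb{F}_{r}$ and each $T_{i}$, $H_{\ell}$ contribute factors in $\mathbb{F}_{r}$, so again the value is a square. By Lemma \ref{lem5} with $c=1$, this yields $\eta\bigl(\delta_{A \triangle B}(e)\bigr) = 1$ for every $e \in A \triangle B$. Note that none of these steps used the parity of $t$ -- $t$ even vs odd affects only the parity of $n$ -- so the argument carries over unchanged.

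Finally, to apply Lemma \ref{lem3} I need $\eta\bigl(-\delta_{A \triangle B}(e)\bigr) = 1$, which requires knowing that $-1$ is a square in $\mathbb{F}_{q}$. This is immediate because $q = r^{2}$ with $r$ odd, so $q \equiv 1 \pmod 4$ and $\eta(-1)=1$. Thus $\eta(-\delta_{A \triangle B}(e)) = \eta(-1)\eta(\delta_{A\triangle B}(e)) = 1$ for all $e$, and Lemma \ref{lem3} gives the desired MDS Euclidean self-dual code of length $n+1$.

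There is no real obstacle here: the argument is a direct parity-shift of Theorem \ref{thm3}. The only point that needs a fresh (but trivial) remark is the observation that $-1 \in (\mathbb{F}_{q}^{*})^{2}$ whenever $q$ is an odd square, which is what lets us pass from $\eta(\delta)=1$ to $\eta(-\delta)=1$ and thus switch from Lemma \ref{lem2} to Lemma \ref{lem3}.
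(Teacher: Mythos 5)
Your proposal is correct and follows essentially the same route as the paper: reuse the sets $A=\bigcup_{i=1}^{t}T_{i}$ and $B=\bigcup_{j=0}^{s}H_{j}$ from Theorem \ref{thm3}, observe that the quadratic-character computation is unaffected by the parity of $t$, note that $n$ is odd and $\eta(-1)=1$, and conclude via Lemma \ref{lem3}. No gaps.
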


\begin{proof}
Let $A$ and $B$ are defined as in Theorem \ref{thm3}. Similar to the proof of Theorem \ref{thm3}, for all $e \in A\triangle B$, we still have
\[\eta\big(\delta_{A \triangle B}(e)\big)=1,\]
i.e.,
\[\eta\big(-\delta_{A \triangle B}(e)\big)=1.\]
Since both $t$ and $s$ are even, $n$ is odd. The theorem then follows from Lemma \ref{lem3}.
\end{proof}

\begin{example}
In Theorem \ref{thm4}, let $p=3$, $r=3^{3}=27$, $q=r^{2}=729$, $t=6$. Then $t'=\lceil\log_{p}(t)\rceil=2$. Let $s=2$, then $n=tr+(s+1)p-2t=159$. So we can obtain an MDS Euclidean self-dual code of length $160$ over $\mathbb{F}_{729}$. The parameters of the code are new in the sense that they have not been obtained in the literature (see Table I).
\end{example}

In the following, we consider two multiplicative subgroups of $\mathbb{F}^{*}_{q}$ and their cosets. Let $\omega$ be a primitive element of $\mathbb{F}_{q}$, and let \[\alpha=\omega^{\mu} \textnormal{ and } \beta=\omega^{\nu},\]
where $\mu$ and $\nu$ are two distinct divisors of $q-1$. Let $\langle\alpha \rangle$ and $\langle\beta \rangle$ be the two multiplicative subgroups of $\mathbb{F}^{*}_{q}$ generated by $\alpha$ and $\beta$, respectively. Suppose $0 \leq s \leq \frac{\mu}{\gcd(\mu,\nu)}$ and $0 \leq t \leq \frac{\nu}{\gcd(\mu,\nu)}$.  Denote $A_{i}=\beta^{i}\langle\alpha \rangle$ and $B_{j}=\alpha^{i}\langle\beta \rangle$. Let
\begin{equation}\label{3}
  A \triangleq \bigcup_{i=0}^{s-1}A_{i},
\end{equation}
and
\begin{equation}\label{4}
  B\triangleq \bigcup_{j=0}^{t-1}B_{j}.
\end{equation}

\begin{lemma}\label{lem6}
Keep the notations as above. Let $A$ and $B$ be defined as \eqref{3} and \eqref{4}, respectively. Then
\[|A|=s\frac{q-1}{\mu}, |B|=t\frac{q-1}{\nu},\]
and
\[|A \cap B|=\frac{(q-1)\gcd(\mu,\nu)}{\mu\nu}st.\]
\end{lemma}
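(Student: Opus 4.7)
The plan is to translate the whole problem into exponents with respect to the primitive element $\omega$. Since $\mu$ and $\nu$ divide $q-1$, the subgroup $\langle\alpha\rangle$ consists exactly of the powers $\omega^{e}$ with $e\equiv 0\pmod{\mu}$ (taking $e$ modulo $q-1$), and $\langle\beta\rangle$ consists of those $\omega^{e}$ with $e\equiv 0\pmod{\nu}$. In this language the coset $A_{i}=\beta^{i}\langle\alpha\rangle$ corresponds to the exponents $e\equiv\nu i\pmod{\mu}$, while $B_{j}=\alpha^{j}\langle\beta\rangle$ corresponds to $e\equiv\mu j\pmod{\nu}$. Writing $D=\gcd(\mu,\nu)$ and $L=\mathrm{lcm}(\mu,\nu)=\mu\nu/D$, all three cardinalities can now be read off from residue counts modulo $\mu$, $\nu$, and $q-1$.

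For $|A|$ and $|B|$, I would first verify that the cosets $A_{0},\ldots,A_{s-1}$ are pairwise disjoint. Since $\beta^{k}\in\langle\alpha\rangle$ iff $\mu\mid\nu k$ iff $\mu/D\mid k$, the smallest positive $k$ with $\beta^{k}\in\langle\alpha\rangle$ is $\mu/D$, so the cosets $\beta^{i}\langle\alpha\rangle$ for $0\leq i<\mu/D$ are all distinct. The hypothesis $s\leq\mu/D$ therefore guarantees the $A_{i}$ are pairwise disjoint, and because $|\langle\alpha\rangle|=(q-1)/\mu$ this gives $|A|=s(q-1)/\mu$. The symmetric argument, with the roles of $\mu$ and $\nu$ interchanged, yields $|B|=t(q-1)/\nu$.

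The main step is the intersection count, and I expect this to be the only nontrivial part. My claim is that for every pair $(i,j)$ with $0\leq i\leq s-1$ and $0\leq j\leq t-1$, the intersection $A_{i}\cap B_{j}$ has the same cardinality $(q-1)D/(\mu\nu)$. An element $\omega^{e}$ lies in $A_{i}\cap B_{j}$ iff the two congruences $e\equiv\nu i\pmod{\mu}$ and $e\equiv\mu j\pmod{\nu}$ hold simultaneously modulo $q-1$. Both right-hand sides are divisible by $D$, so the standard compatibility condition modulo $D$ is automatically satisfied, and a generalized CRT argument shows the solution set is a coset of $\{e\in\mathbb{Z}/(q-1)\mathbb{Z}:L\mid e\}$, which has exactly $(q-1)/L=(q-1)D/(\mu\nu)$ elements. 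Note that this is precisely $|\langle\alpha\rangle\cap\langle\beta\rangle|$, which one could alternatively obtain by observing that $\langle\alpha,\beta\rangle/(\langle\alpha\rangle\cap\langle\beta\rangle)$ splits as an internal direct product of the images of $\langle\alpha\rangle$ and $\langle\beta\rangle$.

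Summing these pairwise contributions gives
\[
|A\cap B|=\sum_{i=0}^{s-1}\sum_{j=0}^{t-1}|A_{i}\cap B_{j}|=st\cdot\frac{(q-1)D}{\mu\nu},
\]
which is the desired formula. The only genuinely delicate point is establishing the uniform cardinality of $A_{i}\cap B_{j}$ across all $(i,j)$; once that is in hand, the remainder is routine bookkeeping with cyclic groups and congruences.
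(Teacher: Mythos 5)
Your proof is correct and follows essentially the same route as the paper: both establish $|A|$ and $|B|$ by showing the $s$ (resp.\ $t$) cosets are pairwise distinct via the divisibility condition $\frac{\mu}{\gcd(\mu,\nu)}\mid i$, and both compute $|A\cap B|$ by fixing a pair of coset indices and counting solutions of a congruence in the exponents of $\omega$, obtaining $\frac{(q-1)\gcd(\mu,\nu)}{\mu\nu}$ per pair. Your CRT packaging of the intersection count is a slightly cleaner phrasing of the same divisibility argument the paper carries out by hand.
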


\begin{proof}
To prove $|A|=s\frac{q-1}{\mu}$, it is only need to show that $1, \beta, \ldots, \beta^{s-1}$ are the representations of $s$ distinct cosets of the subgroup $\langle \alpha \rangle$ in $\mathbb{F}_{q}^{*}$. By contradiction, suppose there exist $0 \leq i_{1} < i_{2} \leq s-1$ and $0 \leq j \leq \frac{q-1}{\mu}-1$ such that $\beta^{i_{2}}=\alpha^{j}\beta^{i_{1}}$. Denote $i =i_{2}-i_{1}$, then  $1 \leq i \leq s-1 < \frac{\mu}{\gcd(\mu,\nu)}$. Note that $\alpha=\omega^{\mu}$ and $\beta=\omega^{\nu}$, we have $\omega^{\nu i-\mu j}=1$. Thus $(q-1) \mid (\nu i-\mu j)$, which implies that $\mu \mid \nu i$.  Thus $\frac{\mu}{\gcd(\mu,\nu)} \mid i $, a contradiction. The first conclusion follows. Similarly, we can prove that $|B|=t\frac{q-1}{\nu}$.

Now let $e \in A\cap B$, then there exist some $0 \leq i_{1} \leq s-1$, $0 \leq j_{1} \leq \frac{q-1}{\mu}-1$, $0 \leq i_{2} \leq t-1$ and $0 \leq j_{2} \leq \frac{q-1}{\nu}-1$ such that \[e=\beta^{i_{1}}\alpha^{j_{1}}=\alpha^{i_{2}}\beta^{j_{2}}.\]
Thus to prove $|A \cap B|=\frac{(q-1)\gcd(\mu,\nu)}{\mu\nu}st$, we only need to show that given $0 \leq i_{1} \leq s-1$ and $0 \leq i_{2} \leq t-1$, the number of pairs $(j_{1}, j_{2})$ with $0 \leq j_{1} \leq \frac{q-1}{\mu}-1$ and $0 \leq j_{2} \leq \frac{q-1}{\nu}-1$, which satisfy
\begin{equation}\label{5}
  \beta^{i_{1}}\alpha^{j_{1}}=\alpha^{i_{2}}\beta^{j_{2}}
\end{equation}
is $\frac{(q-1)\gcd(\mu,\nu)}{\mu\nu}$. Indeed, \eqref{5} is equivalent to
\begin{equation}\label{6}
  \omega^{\mu(j_{1}-i_{2})+\nu(i_{1}-j_{2})}=1,
\end{equation}
hence $\nu \mid \mu(j_{1}-i_{2})$, i.e., $\frac{\nu}{\gcd(\mu,\nu)} \mid (j_{1}-i_{2})$.  Thus $j_{2} (\textnormal{mod } r+1)$ is unique. Since $\mu, \nu \mid (q-1)$, it is not hard to prove that $\frac{\mu\nu}{\gcd(\mu,\nu)} \mid (q-1)$. Thus for $0 \leq j_{1} \leq \frac{q-1}{\mu}-1$, the number of $j_{1}$ satisfying $\frac{\nu}{\gcd(\mu,\nu)} \mid (j_{1}-i_{2})$ is $\frac{q-1}{\mu}/\frac{\nu}{\gcd(\mu,\nu)}=\frac{(q-1)\gcd(\mu,\nu)}{\mu\mu}$. According to \eqref{6} and $0 \leq j_{2} \leq \frac{q-1}{\nu}-1$, $j_{2}$ is uniquely determined after fixing $i_{1}, i_{2}, j_{1}$. The lemma is proved.
\end{proof}

Based on the two sets $A$ and $B$ defined by \eqref{3} and \eqref{4}, respectively, we provide a generic construction of MDS Euclidean self-dual codes as follows.
\begin{theorem}\label{thm5}
Let $q=r^{2}$ and $r \equiv 3 (\textnormal{mod } 4)$. Suppose $\mu, \nu \mid (q-1)$. Let $0 \leq s \leq \frac{\mu}{\gcd(\mu, \nu)}$ and $0 \leq t \leq \frac{\nu}{\gcd(\mu, \nu)}$. Put
\[n=s\frac{q-1}{\mu}+t\frac{q-1}{\nu}-\frac{2(q-1)\gcd(\mu,\nu)}{\mu\nu}st.\]
 Suppose the following conditions hold:
\begin{description}
  \item[(i)] $n$ is even;
  \item[(ii)] $\mu$ is even, $\mu \mid \nu(r+1)$ and $\nu \mid \mu(r-1)$;
  \item[(iii)] both $(\frac{q-1}{\mu}-1)\nu-\frac{(r+1)\nu}{\mu}s$ and $\frac{(r+1)\nu}{\mu}s+\nu$ are even.
\end{description}
Then there exists a $q$-ary MDS Euclidean self-dual code of length $n$.
\end{theorem}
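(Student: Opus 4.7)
The plan is to apply Lemma~\ref{lem5} with the pair $A, B$ given in \eqref{3}--\eqref{4}, deduce that $\eta(\delta_{A \triangle B}(e)) = 1$ for every $e \in A \triangle B$, and then invoke Lemma~\ref{lem2} (applicable because (i) asserts $n$ even). By Lemma~\ref{lem6} we already know $|A \triangle B| = n$. Hence it suffices to verify the two hypotheses
\[
\eta\bigl(\delta_A(a)\pi_B(a)\bigr) = 1 \text{ for all } a \in A \setminus B,
\qquad
\eta\bigl(\pi_A(b)\delta_B(b)\bigr) = 1 \text{ for all } b \in B \setminus A.
\]

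The computation rests on the explicit coset factorizations $\pi_{A_k}(x) = x^{(q-1)/\mu} - \beta^{k(q-1)/\mu}$ with derivative $\pi'_{A_k}(x) = \frac{q-1}{\mu} x^{(q-1)/\mu - 1}$, and analogous formulas for the $B_j$. Fix $a \in A_i \setminus B$ and write $a = \omega^{\mu\ell + \nu i}$. By Lemma~\ref{lem1}(ii) and the identity $a^{(q-1)/\mu} = \beta^{i(q-1)/\mu}$,
\[
\delta_A(a)\pi_B(a) = \tfrac{q-1}{\mu}\, a^{(q-1)/\mu - 1}\!\!\!\prod_{\substack{0 \le k \le s-1 \\ k \ne i}}\!\!\bigl(\beta^{i(q-1)/\mu} - \beta^{k(q-1)/\mu}\bigr)\prod_{j=0}^{t-1}\bigl(a^{(q-1)/\nu} - \alpha^{j(q-1)/\nu}\bigr).
\]
The middle product lies in the cyclic group generated by $\gamma := \beta^{(q-1)/\mu}$, which by the hypothesis $\mu \mid \nu(r+1)$ has order dividing $r+1$; consequently $\gamma^r = \gamma^{-1}$, and the Frobenius-pairing argument from the proofs of Theorems~\ref{thm1} and \ref{thm3} shows that this product is, up to a computable sign, fixed by $x \mapsto x^r$, hence lies in $\mathbb{F}_r$ and is a square in $\mathbb{F}_q = \mathbb{F}_{r^2}$. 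The third product is treated symmetrically using $\nu \mid \mu(r-1)$, which forces $\alpha^{(q-1)/\nu} \in \mathbb{F}_r$ and allows an analogous grouping. Whenever the Frobenius-pairing leaves a residual $-1$ (from an odd number of paired factors), the assumption $r \equiv 3 \pmod 4$ makes $(r+1)/2$ even, so this sign again absorbs into the square class, exactly as in the proof of Theorem~\ref{thm3}.

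After all squares have been absorbed, $\delta_A(a)\pi_B(a)$ reduces to $\omega^{E(a)}$ times a square in $\mathbb{F}_q^{*}$, where $E(a) \pmod{q-1}$ is affine in the free parameters $\ell, i$; say $E(a) \equiv C_0 + C_\ell \ell + C_i i$. For $\eta(\omega^{E(a)})$ to be independent of $a$ (and equal to $1$), both $C_\ell$ and $C_i$ must be even. A careful bookkeeping identifies these coefficients precisely with the two quantities $\bigl(\tfrac{q-1}{\mu}-1\bigr)\nu - \tfrac{(r+1)\nu}{\mu}s$ and $\tfrac{(r+1)\nu}{\mu}s + \nu$ whose parities are imposed by (iii); moreover, the $-\mu \ell$ contribution coming from $a^{(q-1)/\mu - 1}$ is automatically even thanks to $\mu$ even in (ii). The symmetric calculation for $b \in B \setminus A$, with the roles of $(\mu, s, A)$ and $(\nu, t, B)$ swapped, then uses the companion divisibility $\nu \mid \mu(r-1)$ and the evenness of $\mu$ (together with $r \equiv 3 \pmod 4$ for the sign analysis) to obtain $\eta(\pi_A(b)\delta_B(b)) = 1$, completing the hypothesis of Lemma~\ref{lem5}. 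The main obstacle is precisely this bookkeeping of $\omega$-exponents and Frobenius-generated signs: one must track which of the two parity conditions in (iii) controls the variation in $\ell$ (the coset representative) and which controls the variation in $i$ (the coset index), and verify that conditions (ii) make each cross-product fall into $\mathbb{F}_r$ after absorbing the appropriate square factors.
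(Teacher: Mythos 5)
Your overall strategy is the same as the paper's: apply Lemma~\ref{lem5} to the cosets $A,B$ of \eqref{3}--\eqref{4}, use Lemma~\ref{lem6} for $|A\triangle B|=n$, compute $\delta_A(a)\pi_B(a)$ and $\pi_A(b)\delta_B(b)$ via the explicit factorizations $\pi_{A_k}(x)=x^{(q-1)/\mu}-\beta^{k(q-1)/\mu}$, use $\mu\mid\nu(r+1)$ to get $\gamma^r=\gamma^{-1}$ for the Frobenius-pairing of the coset-index product and $\nu\mid\mu(r-1)$ to push the other product into $\mathbb{F}_r$, and finally invoke Lemma~\ref{lem2}. That is exactly the paper's computation, and your identification of the two parity conditions in (iii) with the coefficients of the coset representative and the coset index is correct.

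However, there is one concrete error: you reduce the problem to verifying $\eta\bigl(\delta_A(a)\pi_B(a)\bigr)=\eta\bigl(\pi_A(b)\delta_B(b)\bigr)=1$, and this is \emph{false} in general under hypotheses (i)--(iii). After absorbing all squares, the constant term $C_0$ of your exponent $E(a)$ contains $-\frac{\nu(r+1)}{\mu}\cdot\frac{s(s-1)}{2}$, whose parity is \emph{not} controlled by (ii)--(iii); forcing it even is precisely the extra condition (iv) added in Theorem~\ref{thm6}. For instance, with $\mu=r+1$, $\nu=\frac{r-1}{2}$ and $s\equiv 3\ (\mathrm{mod}\ 4)$ (an instance of Theorem~\ref{thm9}(i), which satisfies (i)--(iii)), one gets $\frac{\nu(r+1)}{\mu}\frac{s(s-1)}{2}=\frac{r-1}{2}\cdot s\cdot\frac{s-1}{2}$ odd, so the common value is $-1$. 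The statement you actually need --- and the one the paper proves --- is that both quantities equal the \emph{same} constant $c=\eta\bigl(\omega^{\frac{\nu(r+1)}{\mu}\frac{s(s-1)}{2}}\bigr)\in\{1,-1\}$; Lemma~\ref{lem5} is stated for an arbitrary $c\in\{1,-1\}$ and Lemma~\ref{lem2} requires only that $\eta(\delta_{A\triangle B}(e))$ be constant, not that it equal $1$. So the fix is immediate (drop the claim $c=1$ and carry the constant through), but as written your verification step cannot be completed, and the distinction matters: it is exactly why Theorem~\ref{thm6} (the odd-$n$, length $n+1$ case, where $\eta(-\delta)=1$ is genuinely needed) carries the additional hypothesis (iv) while Theorem~\ref{thm5} does not.
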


\begin{proof}
Let $\omega$ be a primitive element of $\mathbb{F}_{q}$ and $\alpha=\omega^{\mu}, \beta=\omega^{\nu}$. Denote $A_{i}=\beta^{i}\langle\alpha \rangle$ and $B_{j}=\alpha^{i}\langle\beta \rangle$. Let $A$ and $B$ be defined by \eqref{3} and \eqref{4}, respectively. Then by Lemma \ref{lem6}, we have
\[|A\triangle B|=|A|+|B|-2|A\cap B|=s\frac{q-1}{\mu}+t\frac{q-1}{\nu}-\frac{2(q-1)\gcd(\mu,\nu)}{\mu\nu}st=n.\]
For any  $0 \leq i \leq s-1$ and $0 \leq j \leq t-1$, we have
\[\pi_{A_{i}}(x)=\prod_{\ell=0}^{\frac{q-1}{\mu}-1}(x-\beta^{i}\alpha^{\ell})=x^{\frac{q-1}{\mu}}-\beta^{\frac{q-1}{\mu}i},\]
\[\pi_{A_{i}}'(x)=\frac{q-1}{\mu}x^{\frac{q-1}{\mu}-1},\]
\[\pi_{B_{j}}(x)=\prod_{\ell=0}^{\frac{q-1}{y}-1}(x-\alpha^{j}\beta^{\ell})=x^{\frac{q-1}{y}}-\alpha^{\frac{q-1}{y}j},\]
and
\[\pi_{B_{j}}'(x)=\frac{q-1}{y}x^{\frac{q-1}{y}-1}.\]
On the one hand, for any $\beta^{i}\alpha^{j} \in A\setminus B$, where $0 \leq i \leq s-1$ and $0 \leq j \leq r-2$, by Lemma \ref{lem1}, we have
\begin{eqnarray*}
% \nonumber to remove numbering (before each equation)
  \delta_{A}(\beta^{i}\alpha^{j}) &=& \pi'_{A_{i}}(\beta^{i}\alpha^{j})\prod_{\ell \neq i, \ell=0}^{t-1}\pi_{A_{\ell}}(\beta^{i}\alpha^{j}) \\
    &=& \frac{q-1}{\mu}\beta^{(\frac{q-1}{\mu}-1)i}\alpha^{-j}\prod_{\ell \neq i, \ell=0}^{s-1}(\beta^{\frac{q-1}{\mu}i}- \beta^{\frac{q-1}{\mu}\ell})\\
    &=& \frac{q-1}{\mu}\omega^{(\frac{q-1}{\mu}-1)\nu i-\mu j}\prod_{\ell \neq i, \ell=0}^{s-1}(\omega^{\frac{q-1}{\mu}\nu i}- \omega^{\frac{q-1}{\mu}\nu \ell}).
\end{eqnarray*}
Denote $\Omega=\prod_{\ell \neq i, \ell=0}^{s-1}(\omega^{\frac{q-1}{\mu}\nu i}- \omega^{\frac{q-1}{\mu}\nu \ell})$. Since $\mu \mid (r+1)\nu$, \[(\omega^{\frac{q-1}{\mu}\nu i})^{r+1}=(\omega^{\frac{(r+1)\nu}{\mu} i})^{q-1}=1,\]
i.e.,
\[(\omega^{\frac{q-1}{\mu}\nu i})^{r}=\omega^{-\frac{q-1}{\mu}\nu i}.\]
Thus
\begin{eqnarray*}
% \nonumber to remove numbering (before each equation)
  \Omega^{r} &=& \prod_{\ell \neq i, \ell=0}^{s-1}\big((\omega^{\frac{q-1}{\mu}\nu i})^{r}- (\omega^{\frac{q-1}{\mu}\nu \ell})^{r}\big) \\
    &=& \prod_{\ell \neq i, \ell=0}^{s-1}(\omega^{-\frac{q-1}{\mu}\nu i}- \omega^{-\frac{q-1}{\mu}\nu \ell}) \\
    &=& \prod_{\ell \neq i, \ell=0}^{s-1}\omega^{-\frac{q-1}{\mu}\nu (i+\ell)}(\omega^{\frac{q-1}{\mu}\nu \ell}- \omega^{\frac{q-1}{\mu}\nu i}) \\
    &=& (-1)^{s-1} \omega^{-\frac{q-1}{\mu}\nu\big((s-2)i+\frac{s(s-1)}{2}\big)}\Omega.
\end{eqnarray*}
Hence $\Omega^{r-1}=(-1)^{s-1} \omega^{-\frac{q-1}{\mu}\nu\big((s-2)i+\frac{s(s-1)}{2}\big)}$. Note that $-1=\omega^{(r-1)\frac{r+1}{2}}$ and $\frac{q-1}{\mu}\nu=(r-1)\frac{\nu(r+1)}{\mu}$, thus there exists an integer $k$, such that
\[\Omega=\omega^{\frac{r+1}{2}(s-1)-\frac{\nu(r+1)}{\mu}\big((s-2)i+\frac{s(s-1)}{2}\big)+k(r+1)}.\]
Hence,
\begin{equation}\label{7}
  \delta_{A}(\beta^{i}\alpha^{j})=\frac{q-1}{\mu}\omega^{(\frac{q-1}{\mu}-1)\nu i-\mu j+\frac{r+1}{2}(s-1)-\frac{\nu(r+1)}{\mu}\big((s-2)i+\frac{s(s-1)}{2}\big)+k(r+1)}.
\end{equation}
$\frac{q-1}{\mu}$ is a square in $\mathbb{F}_{q}$ since each element of $\mathbb{F}_{p}$ is a square in $\mathbb{F}_{q}$. Since $r \equiv 3 (\textnormal{mod } 4)$, $\frac{r+1}{2}$ is even. By conditions (ii) and (iii), $\mu$ and $(\frac{q-1}{\mu}-1)\nu-\frac{(r+1)\nu}{\mu}s$ are even. Thus, from Eq. \eqref{7}, we have
\[\eta(\delta_{A}(\beta^{i}\alpha^{j}))=\eta(\omega^{\frac{\nu(r+1)}{\mu}\frac{s(s-1)}{2}}).\]
Note that
\begin{eqnarray*}
% \nonumber to remove numbering (before each equation)
  \pi_{B}(\beta^{i}\alpha^{j}) &=& \prod_{\ell=0}^{t-1}\pi_{B_{\ell}}(\beta^{i}\alpha^{j}) \\
    &=& \prod_{\ell=0}^{t-1}(\alpha^{\frac{q-1}{\nu}j}-\alpha^{\frac{q-1}{\nu}\ell})\\
    &=& \prod_{\ell=0}^{t-1}(\omega^{\frac{q-1}{\nu}\mu j}-\omega^{\frac{q-1}{\nu}\mu\ell}).
\end{eqnarray*}
Since $\nu \mid \mu(r-1)$,
\[(\omega^{\frac{q-1}{\nu}\mu})^{r-1}=(\omega^{\frac{\mu(r-1)}{\nu}})^{q-1}=1.\]
Thus $\omega^{\frac{q-1}{\nu}\mu} \in \mathbb{F}_{r}$, hence $\pi_{B}(\beta^{i}\alpha^{j}) \in \mathbb{F}_{r},$ which is a square in $\mathbb{F}_{q}$.
Therefore,
\[\eta\big(\delta_{A}(\beta^{i}\alpha^{j})\pi_{B}(\beta^{i}\alpha^{j})\big)=\eta(\omega^{\frac{\nu(r+1)}{\mu}\frac{s(s-1)}{2}}).\]
On the other hand, for any $\alpha^{i}\beta^{j} \in B\setminus A$,
\begin{eqnarray*}
% \nonumber to remove numbering (before each equation)
  \pi_{A}(\alpha^{i}\beta^{j}) &=& \prod_{\ell=0}^{s-1}\pi_{A_{\ell}}(\alpha^{i}\beta^{j}) \\
   &=& \prod_{\ell=0}^{s-1}(\beta^{\frac{q-1}{\mu}j}- \beta^{\frac{q-1}{\mu}\ell})\\
   &=& \prod_{\ell=0}^{s-1}(\omega^{\frac{q-1}{\mu}\nu j}- \omega^{\frac{q-1}{\mu}\nu \ell}).
\end{eqnarray*}
Similarly as above, we can show that there exists an integer $k$ such that
\begin{equation}\label{8}
  \pi_{A}(\alpha^{i}\beta^{j})=\omega^{\frac{r+1}{2}s-\frac{\nu(r+1)}{\mu}\big(sj+\frac{s(s-1)}{2}\big)+k(r+1)}.
\end{equation}
By Lemma \ref{lem1} again,
\begin{equation} \label{9}
  \begin{split}
  \delta_{B}(\alpha^{i}\beta^{j})&=\pi_{B_{i}}'(\alpha^{i}\beta^{j})\prod_{\ell \neq i}\pi_{B_{\ell}}(\alpha^{i}\beta^{j})    \\
  &=\frac{q-1}{\nu}\alpha^{(\frac{q-1}{\nu}-1)i}\beta^{-j}\prod_{\ell \neq i}(\alpha^{\frac{q-1}{\nu}i}-\alpha^{\frac{q-1}{\nu}\ell})\\
  &= \frac{q-1}{\nu}\omega^{(\frac{q-1}{\nu}-1)\mu i-\nu j}\prod_{\ell \neq i}(\omega^{\frac{q-1}{\nu}\mu j}-\omega^{\frac{q-1}{\nu}\mu\ell}).
  \end{split}
\end{equation}
By the above proof, we know that $\prod_{\ell \neq i}(\omega^{\frac{q-1}{\nu}\mu j}-\omega^{\frac{q-1}{\nu}\mu\ell}) \in \mathbb{F}_{r}$ which is a square in $\mathbb{F}_{q}$. Thus by conditions (ii) and (iii), we can deduce that
\[\eta\big(\pi_{A}(\alpha^{i}\beta^{j})\delta_{B}(\alpha^{i}\beta^{j})\big)=\eta(\omega^{\frac{\nu(r+1)}{\mu}\frac{s(s-1)}{2}}).\]
In summary, by Lemma \ref{lem5}, for any $e \in A\bigtriangleup B$,
\[\eta\big(\delta_{A\bigtriangleup B}(e)\big)=\eta(\omega^{\frac{\nu(r+1)}{\mu}\frac{s(s-1)}{2}}).\]
The theorem then follows from Lemma \ref{lem2}.
\end{proof}

Based on Theorem \ref{thm5}, we can similarly give a construction of MDS Euclidean self-dual codes with length $n+1$ by Lemma \ref{lem3} as follows.
\begin{theorem}\label{thm6}
Let $q=r^{2}$ and $r \equiv 3 (\textnormal{mod } 4)$. Suppose $\mu, \nu \mid (q-1)$. Let $0 \leq s \leq \frac{\mu}{\gcd(\mu, \nu)}$ and $0 \leq t \leq \frac{\nu}{\gcd(\mu, \nu)}$. Put
\[n=s\frac{q-1}{\mu}+t\frac{q-1}{\nu}-\frac{2(q-1)\gcd(\mu,\nu)}{\mu\nu}st.\]
Suppose the following conditions hold:
\begin{description}
  \item[(i)] $n$ is odd;
  \item[(ii)] $\mu$ is even, $\mu \mid \nu(r+1)$ and $\nu \mid \mu(r-1)$;
  \item[(iii)] both $(\frac{q-1}{\mu}-1)\nu-\frac{(r+1)\nu}{\mu}s$ and $\frac{(r+1)\nu}{\mu}s+\nu$ are even;
  \item[(iv)] $\frac{\nu(r+1)}{\mu}\frac{s(s-1)}{2}$ is even.
\end{description}
Then there exists a $q$-ary MDS Euclidean self-dual code of length $n+1$.
\end{theorem}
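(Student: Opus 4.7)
The plan is to mirror the argument of Theorem \ref{thm5} almost verbatim, inserting exactly one additional squareness check to compensate for the sign that appears in the hypothesis of Lemma \ref{lem3}. Specifically, I would define the same two sets $A=\bigcup_{i=0}^{s-1}\beta^{i}\langle\alpha\rangle$ and $B=\bigcup_{j=0}^{t-1}\alpha^{j}\langle\beta\rangle$ used there, invoke Lemma \ref{lem6} to conclude $|A\triangle B|=n$, and then carry out exactly the same computation of $\delta_{A}(\beta^{i}\alpha^{j})\pi_{B}(\beta^{i}\alpha^{j})$ for points in $A\setminus B$ and of $\pi_{A}(\alpha^{i}\beta^{j})\delta_{B}(\alpha^{i}\beta^{j})$ for points in $B\setminus A$. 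Conditions (ii) and (iii) are exactly what makes that computation go through, so it yields the same identity
\[
\eta\bigl(\delta_{A}(a)\pi_{B}(a)\bigr)=\eta\bigl(\pi_{A}(b)\delta_{B}(b)\bigr)=\eta\bigl(\omega^{\frac{\nu(r+1)}{\mu}\frac{s(s-1)}{2}}\bigr)
\]
for all $a\in A\setminus B$ and $b\in B\setminus A$, and then Lemma \ref{lem5} gives $\eta(\delta_{A\triangle B}(e))=\eta(\omega^{\frac{\nu(r+1)}{\mu}\frac{s(s-1)}{2}})$ for every $e\in A\triangle B$.

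The new ingredient needed, compared with Theorem \ref{thm5}, is turning the common value of $\eta(\delta_{A\triangle B}(e))$ into the value $\eta(-\delta_{A\triangle B}(e))=1$ required by Lemma \ref{lem3}. Two observations handle this. First, since $q=r^{2}$ with $r\equiv 3\pmod 4$ we have $q\equiv 1\pmod 4$, so $-1$ is a square in $\mathbb{F}_{q}$ and hence $\eta(-1)=1$. Second, condition (iv) says that $\frac{\nu(r+1)}{\mu}\frac{s(s-1)}{2}$ is even, so $\omega^{\frac{\nu(r+1)}{\mu}\frac{s(s-1)}{2}}$ is an even power of the primitive element $\omega$, hence a square in $\mathbb{F}_{q}$. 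Combining these two facts,
\[
\eta\bigl(-\delta_{A\triangle B}(e)\bigr)=\eta(-1)\cdot\eta\bigl(\omega^{\frac{\nu(r+1)}{\mu}\frac{s(s-1)}{2}}\bigr)=1\cdot 1=1
\]
for every $e\in A\triangle B$.

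To finish, condition (i) says $n=|A\triangle B|$ is odd, so the hypothesis of Lemma \ref{lem3} is fully met, and that lemma immediately produces a $q$-ary MDS Euclidean self-dual code of length $n+1$. The only genuine content beyond Theorem \ref{thm5} is the sign bookkeeping in the previous paragraph, so there is no real obstacle; the main thing to double-check is that condition (iv) is the precise parity condition that promotes the common square-class value from Theorem \ref{thm5} to the specific value $1$, and that $\eta(-1)=1$ is automatic from $r\equiv 3\pmod 4$. The remainder of the proof is essentially a citation of the computations already carried out in the proof of Theorem \ref{thm5}.
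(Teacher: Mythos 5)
Your proposal is correct and follows essentially the same route as the paper: reuse the sets $A$, $B$ and the computation from Theorem \ref{thm5} to get $\eta(\delta_{A\triangle B}(e))=\eta(\omega^{\frac{\nu(r+1)}{\mu}\frac{s(s-1)}{2}})$, then use condition (iv) together with $\eta(-1)=1$ (from $q\equiv 1\pmod 4$) to conclude $\eta(-\delta_{A\triangle B}(e))=1$ and apply Lemma \ref{lem3}. The paper's proof is just a terser version of exactly this argument.
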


\begin{proof}
Let $A$ and $B$ be defined as in Theorem \ref{thm5}. Then from the proof of Theorem \ref{thm5}, for any $e \in A\bigtriangleup B$,
\[\eta\big(\delta_{A\bigtriangleup B}(e)\big)=\eta(\omega^{\frac{\nu(r+1)}{\mu}\frac{s(s-1)}{2}})=1.\]
The second equality follows by condition (iv). Since $\eta(-1)=1$, we have
\[\eta\big(-\delta_{A\bigtriangleup B}(e)\big)=1.\]
The theorem then follows from Lemma \ref{lem3}.
\end{proof}

 Modifying the conditions of Theorem \ref{thm5} slightly and adding the zero element into the set $A$, we provide the following construction of MDS Euclidean self-dual codes with length $n+2$ via  Lemma \ref{lem3}.
\begin{theorem}\label{thm7}
Let $q=r^{2}$ and $r \equiv 3 (\textnormal{mod } 4)$. Suppose $\mu, \nu \mid (q-1)$. Let $0 \leq s \leq \frac{\mu}{\gcd(\mu, \nu)}$ and $0 \leq t \leq \frac{\nu}{\gcd(\mu, \nu)}$. Put
\[n=s\frac{q-1}{\mu}+t\frac{q-1}{\nu}-\frac{2(q-1)\gcd(\mu,\nu)}{\mu\nu}st.\]
Suppose the following conditions hold:
\begin{description}
  \item[(i)] $n$ is even;
  \item[(ii)] $\mu \mid \nu(r+1)$ and $\nu \mid \mu(r-1)$;
  \item[(iii)] both $\frac{q-1}{\mu}\nu$ and $\frac{(r+1)\nu}{\mu}s$ are even;
  \item[(iv)] $\frac{\nu(r+1)}{\mu}\frac{s(s-1)}{2}$ is even.
\end{description}
Then there exists a $q$-ary MDS Euclidean self-dual code of length $n+2$.

\end{theorem}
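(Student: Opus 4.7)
The plan is to reuse the same pair of sets $A=\bigcup_{i=0}^{s-1}A_i$ and $B=\bigcup_{j=0}^{t-1}B_j$ introduced in the proof of Theorem \ref{thm5}, but to enlarge the evaluation set by the zero element, letting $A':=(A\triangle B)\cup\{0\}$. Since $0\notin A\cup B\subseteq \mathbb{F}_q^{*}$, this union is disjoint, so by Lemma \ref{lem6} we have $|A'|=|A\triangle B|+1=n+1$, which is odd by hypothesis~(i). Invoking Lemma \ref{lem3} on $A'$ will then yield a $q$-ary MDS Euclidean self-dual code of length $n+2$, provided I verify $\eta\bigl(-\delta_{A'}(e)\bigr)=1$ for every $e\in A'$. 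Because $q=r^{2}\equiv 1\pmod{4}$ gives $\eta(-1)=1$, this reduces to checking that $\delta_{A'}(e)$ is a square in $\mathbb{F}_q^{*}$ for all $e\in A'$.

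For a nonzero $e\in A\triangle B$, I would use the direct identity $\delta_{A'}(e)=e\cdot \delta_{A\triangle B}(e)$, together with the factorization $\delta_{A\triangle B}(e)=\delta_A(e)\pi_B(e)/\pi_{A\cap B}(e)^{2}$ derived inside the proof of Lemma \ref{lem5}. The central observation is that the extra factor $e=\omega^{\nu i+\mu j}$ (for $e=\beta^i\alpha^j\in A\setminus B$) or $e=\omega^{\mu i+\nu j}$ (for $e=\alpha^i\beta^j\in B\setminus A$) \emph{exactly cancels} the offending monomials $-\mu j$ and $-\nu j$ appearing in the Theorem~\ref{thm5} computations, namely in \eqref{7} and \eqref{9}. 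This is precisely why the hypothesis ``$\mu$ even'' may be dropped from condition~(ii) and why condition~(iii) takes a simpler form: after cancellation the $i$-dependent part of the $\omega$-exponent collapses to $\tfrac{q-1}{\mu}\nu\cdot i$ in the first case and to $\tfrac{q-1}{\nu}\mu\cdot i$ in the second. The former is even by (iii); the latter equals $\tfrac{\mu(r-1)}{\nu}(r+1)\cdot i$, whose first factor is an integer by~(ii) and whose second factor is even, so it is automatically even. The remaining $(i,j)$-independent piece reduces, modulo squares, to $\omega^{\frac{\nu(r+1)}{\mu}\frac{s(s-1)}{2}}$, which is a square by~(iv); the term $\tfrac{r+1}{2}(s-1)$ is even since $r\equiv 3\pmod 4$, and $\tfrac{q-1}{\mu}$ lies in $\mathbb{F}_p$, hence is a square in $\mathbb{F}_q$.

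The genuinely new case is $e=0$, where $\delta_{A'}(0)=\pi_{A\triangle B}(0)=\pi_A(0)\pi_B(0)/\pi_{A\cap B}(0)^{2}$, the denominator being a nonzero square since $0\notin A\cap B$. Using $\pi_{A_i}(x)=x^{(q-1)/\mu}-\beta^{i(q-1)/\mu}$ and the analogous formula for $B_j$, one computes $\pi_A(0)=(-1)^{s}\omega^{\nu(q-1)s(s-1)/(2\mu)}$ and $\pi_B(0)=(-1)^{t}\omega^{\mu(q-1)t(t-1)/(2\nu)}$. The signs contribute squares because $\eta(-1)=1$, and the $\omega$-exponents rewrite, via condition~(ii), as $(r-1)\cdot\tfrac{\nu(r+1)}{\mu}\cdot\tfrac{s(s-1)}{2}$ and $(r+1)\cdot\tfrac{\mu(r-1)}{\nu}\cdot\tfrac{t(t-1)}{2}$, each containing the even factor $r-1$ or $r+1$; hence both are even and $\pi_{A\triangle B}(0)$ is a square. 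Combining the two paragraphs and applying Lemma \ref{lem3} finishes the proof. The main technical hurdle is the parity bookkeeping for the perturbed quantity $e\cdot\delta_{A\triangle B}(e)$ uniformly over $A\setminus B$ and $B\setminus A$; the $e=0$ case is conceptually easy but relies on the divisibilities in~(ii) to force the two resulting exponents to be automatically even.
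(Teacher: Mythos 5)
Your proposal is correct and follows essentially the same route as the paper: the paper forms $A'=\{0\}\cup A$ and applies Lemma \ref{lem5} to the pair $(A',B)$, which yields exactly your evaluation set $(A\triangle B)\cup\{0\}$, and it verifies squareness via the same cancellation of the $-\mu j$ (resp. $-\nu j$) terms in \eqref{7} and \eqref{9} by the factor $e$, the same automatic evenness of $\tfrac{q-1}{\nu}\mu$, and the same treatment of $\pi_A(0)\pi_B(0)$ at the new point $0$. The only difference is the cosmetic one of whether $0$ is adjoined before or after taking the symmetric difference.
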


\begin{proof}
Let $A$ and $B$ be defined as in Theorem \ref{thm5}. Let $A'=\{0\} \cup A$. Then $|A'\bigtriangleup B|=1+|A\bigtriangleup B|=n+1$. Firstly,
\begin{eqnarray*}
% \nonumber to remove numbering (before each equation)
  \delta_{A'}(0)\pi_{B}(0) &=& \prod_{i=0}^{s-1}\pi_{A_{i}}(0)\prod_{j=0}^{t-1}\pi_{B_{j}}(0) \\
    &=&  (-1)^{s+t}\prod_{i=0}^{s-1}\beta^{\frac{q-1}{\mu}i}\prod_{j=0}^{t-1}\alpha^{\frac{q-1}{\nu}j} \\
    &=&  (-1)^{s+t}\prod_{i=0}^{s-1}\prod_{j=0}^{t-1}\omega^{\frac{q-1}{\mu}\nu i+\frac{q-1}{\nu}\mu j}.
\end{eqnarray*}
Both $\frac{q-1}{\mu}\nu=(r+1)\frac{\mu(r-1)}{\nu}$ and $\frac{q-1}{\nu}\mu=(r-1)\frac{\nu(r+1)}{\mu}$ are even by condition (i). Thus
\[\eta\big(\delta_{A'}(0)\pi_{B}(0)\big)=1.\]
Secondly, for any $\beta^{i}\alpha^{j} \in A'\setminus B$,
from \eqref{7},
\begin{eqnarray*}
% \nonumber to remove numbering (before each equation)
  \delta_{A'}(\beta^{i}\alpha^{j}) &=& \beta^{i}\alpha^{j}\delta_{A}(\beta^{i}\alpha^{j}) \\
    &=& \frac{q-1}{\mu}\omega^{\frac{q-1}{\mu}\nu i+\frac{r+1}{2}(s-1)-\frac{\nu(r+1)}{\mu}\big((s-2)i+\frac{s(s-1)}{2}\big)+k(r+1)},
\end{eqnarray*}
for some integer $k$. From conditions (ii)-(iv),
\[\eta\big(\delta_{A'}(\beta^{i}\alpha^{j})\big)=1.\]
From the proof of Theorem \ref{thm5}, $\pi_{B}(\beta^{i}\alpha^{j}) \in \mathbb{F}_{r}$ which is a square. Thus
\[\eta\big(\delta_{A'}(\beta^{i}\alpha^{j})\pi_{B}(\beta^{i}\alpha^{j})\big)=1.\]
Finally, for any $\alpha^{i}\beta^{j} \in B \setminus A'$, by \eqref{8} and \eqref{9}, there exists an integer $k$ such that
\[\alpha^{i}\beta^{j}\pi_{A}(\alpha^{i}\beta^{j})\delta_{B}(\alpha^{i}\beta^{j})=\frac{q-1}{\nu}\omega^{\frac{r+1}{2}s-\frac{\nu(r+1)}{\mu}\big(sj+\frac{s(s-1)}{2}\big)+k(r+1)+\frac{q-1}{\nu}\mu i}\Omega,\]
where $\Omega=\prod_{\ell \neq i}(\omega^{\frac{q-1}{\nu}\mu j}-\omega^{\frac{q-1}{\nu}\mu\ell}) \in \mathbb{F}_{r}$. By conditions (i)-(iii), $\alpha^{i}\beta^{j}\pi_{A}(\alpha^{i}\beta^{j})\delta_{B}(\alpha^{i}\beta^{j})$ is a square in $\mathbb{F}_{q}$. Thus
\begin{eqnarray*}
% \nonumber to remove numbering (before each equation)
  \eta\big(\pi_{A'}(\alpha^{i}\beta^{j})\delta_{B}(\alpha^{i}\beta^{j})\big) &=& \eta\big(\alpha^{i}\beta^{j}\pi_{A}(\alpha^{i}\beta^{j})\delta_{B}(\alpha^{i}\beta^{j})\big) \\
    &=& 1.
\end{eqnarray*}
In summary, by Lemma \ref{lem5}, for any $e \in A\bigtriangleup B$,
\[\eta\big(\delta_{A'\bigtriangleup B}(e)\big)=\eta\big(-\delta_{A'\bigtriangleup B}(e)\big)=1.\]
The theorem then follows from Lemma \ref{lem3}.
\end{proof}

Theorems \ref{thm5}, \ref{thm6} and \ref{thm7} provide general and powerful constructions of MDS Euclidean self-dual codes. By choosing different pairs $(\mu, \nu)$, we can obtain several explicit constructions of MDS Euclidean self-dual codes with flexible parameters. We list some pairs $(\mu, \nu)$ with specific values in the following theorems.

 \begin{theorem}\label{thm8}
Suppose $q=r^{2}$ and  $r \equiv 3 (\textnormal{mod } 4)$. Let  $0 \leq s \leq \frac{r+1}{2}$ and $0 \leq t \leq \frac{r-1}{2}$. Put $n=s(r-1)+t(r+1)-2st$, then there exist two $q$-ary MDS Euclidean self-dual codes of lengths $n$ and $n+2$, respectively.
\end{theorem}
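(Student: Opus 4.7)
The plan is to specialize the two master constructions, Theorem \ref{thm5} (for length $n$) and Theorem \ref{thm7} (for length $n+2$), to the divisor pair $\mu = r+1$, $\nu = r-1$ of $q-1 = r^2-1$. With this choice, $\alpha = \omega^{r+1}$ generates $\mathbb{F}_r^*$ and $\beta = \omega^{r-1}$ generates the unique subgroup of $\mathbb{F}_q^*$ of order $r+1$. Since $r$ is odd, $\gcd(\mu,\nu) = \gcd(r+1,r-1) = 2$, so the upper bounds $\mu/\gcd(\mu,\nu) = (r+1)/2$ and $\nu/\gcd(\mu,\nu) = (r-1)/2$ appearing in the hypotheses of Theorems \ref{thm5} and \ref{thm7} coincide exactly with the ranges of $s$ and $t$ in the present theorem, and the length formula from Theorem \ref{thm5} then reduces to the expression claimed for $n$.

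My first step would be to verify the three conditions of Theorem \ref{thm5} under this specialization. Condition (i), the parity of $n$, is immediate because $r-1$ and $r+1$ are both even. Condition (ii) is trivial: $\mu = r+1$ is even since $r$ is odd, and because $\mu\nu = q-1$, both divisibility requirements $\mu \mid \nu(r+1)$ and $\nu \mid \mu(r-1)$ hold automatically. For condition (iii) one substitutes $\frac{(r+1)\nu}{\mu} = r-1$, so the two quantities in question become $(r-1)(r-2-s)$ and $(r-1)(s+1)$, both of which are even because $r-1$ is even. Theorem \ref{thm5} therefore yields the MDS Euclidean self-dual code of length $n$.

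The second step is to invoke Theorem \ref{thm7} with the same pair $(\mu,\nu)$ to obtain the code of length $n+2$. Conditions (i)--(iii) there are verified exactly as above (condition (ii) of Theorem \ref{thm7} drops the requirement that $\mu$ be even, but this still holds here). The only genuinely new hypothesis is condition (iv): $\frac{\nu(r+1)}{\mu}\cdot\frac{s(s-1)}{2} = (r-1)\frac{s(s-1)}{2}$ must be even. This is automatic, since $s(s-1)$ is a product of two consecutive integers and $r-1$ is even. Theorem \ref{thm7} then supplies the self-dual code of length $n+2$.

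I do not foresee any real obstacle: the conceptually nontrivial work (controlling $\eta\bigl(\delta_{A\triangle B}(e)\bigr)$ via Lemma \ref{lem5} and the explicit character-sum computations over $\langle\alpha\rangle$ and $\langle\beta\rangle$) has already been absorbed into Theorems \ref{thm5} and \ref{thm7}. What remains is purely a matter of bookkeeping, and the only subtle point is keeping track of the parity conditions (iii) and (iv); all of them collapse to the single observation that $r-1$ is even.
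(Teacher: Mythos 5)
Your proposal is correct and follows exactly the paper's own route: specializing Theorems \ref{thm5} and \ref{thm7} to $\mu=r+1$, $\nu=r-1$ and checking the parity and divisibility conditions (which the paper leaves to the reader and you carry out explicitly, all reducing to $r-1$ being even). No discrepancies.
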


\begin{proof}
Let $\mu=r+1$ and $\nu=r-1$. Then $\frac{\mu}{\gcd(\mu, \nu)}=\frac{r+1}{2}$, $\frac{\nu}{\gcd(\mu, \nu)}= \frac{r-1}{2}$ and  $s\frac{q-1}{\mu}+t\frac{q-1}{\nu}-\frac{2(q-1)\gcd(\mu,\nu)}{\mu\nu}st=s(r-1)+t(r+1)-2st=n$.  It can be verified that the conditions in Theorems \ref{thm5} and \ref{thm7} hold. Thus the theorem follows.
\end{proof}

\begin{example}
In Theorem \ref{thm8}, let $r=19$, $q=r^{2}=361$, $s=3$ and $t=5$. Then  $n=s(r-1)+t(r+1)-2st=124$. So we can obtain two MDS Euclidean self-dual codes of length $124$ and $126$ over $\mathbb{F}_{361}$, respectively. The parameters of the two codes are new in the sense that they have not been obtained in the literature (see Table I).
\end{example}

\begin{theorem}\label{thm9}
Suppose $q=r^{2}$ and $r \equiv 3 (\textnormal{mod } 4)$. Let  $0 \leq s \leq r+1$ and $0 \leq t \leq \frac{r-1}{2}$. Put $n=s(r-1)+2t(r+1)-4st$.
 \begin{description}
   \item[(i)] If $s$ is odd, then there exists a $q$-ary MDS Euclidean self-dual code of length $n$;
   \item[(ii)] if $s \equiv 0 (\textnormal{mod } 4)$, then there exists a $q$-ary MDS Euclidean self-dual code of length $n+2$.
 \end{description}
\end{theorem}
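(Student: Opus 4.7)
The plan is to specialize Theorems \ref{thm5} and \ref{thm7} to the pair $(\mu, \nu) = (r+1, \tfrac{r-1}{2})$ and verify their hypotheses. The first step is to confirm that this choice reproduces the length formula and the ranges on $s, t$. A short computation gives $\gcd(r+1, \tfrac{r-1}{2}) = 1$: any common divisor $d$ divides $\gcd(2(r+1), r-1) = \gcd(4, r-1) = 2$, but $\tfrac{r-1}{2}$ is odd because $r \equiv 3 (\textnormal{mod } 4)$, forcing $d = 1$. Consequently $\tfrac{q-1}{\mu} = r-1$, $\tfrac{q-1}{\nu} = 2(r+1)$, and $\tfrac{2(q-1)\gcd(\mu,\nu)}{\mu\nu} = 4$, so the generic length in Theorems \ref{thm5} and \ref{thm7} specializes to $s(r-1)+2t(r+1)-4st$. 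The ranges $\tfrac{\mu}{\gcd(\mu,\nu)} = r+1$ and $\tfrac{\nu}{\gcd(\mu,\nu)} = \tfrac{r-1}{2}$ match the bounds on $s$ and $t$ in the statement.

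Next I would check the divisibility hypothesis (ii) of both theorems: $\mu = r+1$ is even, $\mu \mid \nu(r+1)$ is immediate, and $\nu \mid \mu(r-1)$ holds since $\tfrac{r-1}{2} \mid (r+1)(r-1)$. With $\tfrac{(r+1)\nu}{\mu} = \tfrac{r-1}{2}$ (which is odd under our parity assumption on $r$), condition (iii) of Theorem \ref{thm5} asks that $\tfrac{r-1}{2}(r-2-s)$ and $\tfrac{r-1}{2}(s+1)$ be even; since $r-2$ is odd, both hold precisely when $s$ is odd. Moreover $s(r-1) \equiv 2s (\textnormal{mod } 4)$ while $2t(r+1)$ and $4st$ are divisible by $4$, so $n \equiv 2s (\textnormal{mod } 4)$, which is even whenever $s$ is odd. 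Thus all hypotheses of Theorem \ref{thm5} are met, proving part (i).

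For part (ii) I apply Theorem \ref{thm7} under $s \equiv 0 (\textnormal{mod } 4)$. Then $n \equiv 0 (\textnormal{mod } 4)$ is even, giving hypothesis (i). Hypothesis (iii) requires $\tfrac{q-1}{\mu}\nu = \tfrac{(r-1)^{2}}{2}$ and $\tfrac{(r+1)\nu}{\mu}s = \tfrac{r-1}{2}s$ to be even; writing $r-1 = 2k$ with $k$ odd gives $\tfrac{(r-1)^{2}}{2} = 2k^{2}$, which is even, and $4 \mid s$ makes $\tfrac{r-1}{2}s$ even. Hypothesis (iv) demands that $\tfrac{r-1}{2}\cdot\tfrac{s(s-1)}{2}$ be even, which follows from $s \equiv 0 (\textnormal{mod } 4)$ forcing $\tfrac{s(s-1)}{2}$ to be even. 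Therefore Theorem \ref{thm7} yields an MDS Euclidean self-dual code of length $n+2$.

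No genuine obstacle is anticipated here: the heavy lifting is packaged inside Theorems \ref{thm5} and \ref{thm7}, and the residual task is the arithmetic one of choosing $(\mu, \nu)$ and tracking $2$-adic valuations under $r \equiv 3 (\textnormal{mod } 4)$. The case split in the statement, odd $s$ in (i) versus $s \equiv 0 (\textnormal{mod } 4)$ in (ii), is exactly what the parity conditions (iii) of Theorem \ref{thm5} and (iii)--(iv) of Theorem \ref{thm7} force upon us.
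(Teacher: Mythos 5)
Your proposal is correct and follows exactly the paper's route: the paper's proof of this theorem likewise sets $(\mu,\nu)=(r+1,\tfrac{r-1}{2})$ and invokes Theorems \ref{thm5} and \ref{thm7}, merely asserting that ``it can be verified that the conditions hold.'' Your explicit verification of the gcd computation, the length formula, and the parity conditions (in particular that $\tfrac{(r+1)\nu}{\mu}=\tfrac{r-1}{2}$ is odd, forcing $s$ odd in part (i), and that $s\equiv 0\ (\mathrm{mod}\ 4)$ settles conditions (iii)--(iv) of Theorem \ref{thm7} in part (ii)) is accurate and simply fills in the details the paper omits.
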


\begin{proof}
Let $\mu=r+1$ and $\nu=\frac{r-1}{2}$. Then $\frac{\mu}{\gcd(\mu, \nu)}=r+1$, $\frac{\nu}{\gcd(\mu, \nu)}= \frac{r-1}{2}$ and  $s\frac{q-1}{\mu}+t\frac{q-1}{\nu}-\frac{2(q-1)\gcd(\mu,\nu)}{\mu\nu}st=s(r-1)+2t(r+1)-4st=n$.
\begin{description}
  \item[(i)] If $s$ is odd, it can be verified that the conditions in Theorem \ref{thm5} hold;
  \item[(ii)] If $s \equiv 0 (\textnormal{mod } 4)$, it can be verified that the conditions in Theorem \ref{thm7} hold. Thus the theorem follows.
\end{description}
\end{proof}

\begin{example}
In Theorem \ref{thm9}, let $r=11$ and $q=r^{2}=121$.
\begin{description}
  \item[(i)] Let $s=t=3$. Then  $n=s(r-1)+2t(r+1)-4st=66$. So we can obtain an MDS Euclidean self-dual code of length $66$ over $\mathbb{F}_{121}$ from Theorem \ref{thm9} (i).
  \item[(ii)] Let $s=4$ and $t=2$. Then $n=s(r-1)+2t(r+1)-4st=56$. So we can obtain an MDS Euclidean self-dual code of length $58$ over $\mathbb{F}_{121}$ from Theorem \ref{thm9} (ii).
\end{description}
The parameters of the two code are new in the sense that they have not been obtained in the literature (see Table I).
\end{example}

\begin{theorem}\label{thm10}
Suppose $q=r^{2}$ and $r \equiv 3 (\textnormal{mod } 4)$. Let  $0 \leq s \leq \frac{r+1}{4}$ and $0 \leq t \leq \frac{r-1}{2}$. Put $n=2s(r-1)+t(r+1)-8st$. Then there exist two $q$-ary MDS Euclidean self-dual codes of lengths $n$ and $n+2$, respectively.
\end{theorem}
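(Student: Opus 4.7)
The plan is to apply Theorems \ref{thm5} and \ref{thm7} with carefully chosen divisors $\mu$ and $\nu$ of $q-1$, following the same template as the proofs of Theorems \ref{thm8} and \ref{thm9}. Reading off the length formula $n=2s(r-1)+t(r+1)-8st$ and comparing with the generic expression
\[n=s\tfrac{q-1}{\mu}+t\tfrac{q-1}{\nu}-\tfrac{2(q-1)\gcd(\mu,\nu)}{\mu\nu}st,\]
I would set $\tfrac{q-1}{\mu}=2(r-1)$ and $\tfrac{q-1}{\nu}=r+1$, which forces $\mu=\tfrac{r+1}{2}$ and $\nu=r-1$.

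The first step is to justify the ranges on $s$ and $t$. Since $r\equiv 3\pmod 4$, the integer $\tfrac{r+1}{2}$ is even, and $\gcd(r+1,r-1)=2$ together with a straightforward Bezout computation yields $\gcd(\mu,\nu)=\gcd(\tfrac{r+1}{2},r-1)=2$. Hence $\tfrac{\mu}{\gcd(\mu,\nu)}=\tfrac{r+1}{4}$ and $\tfrac{\nu}{\gcd(\mu,\nu)}=\tfrac{r-1}{2}$, matching the hypotheses on $s$ and $t$, and a direct check gives $\tfrac{2(q-1)\gcd(\mu,\nu)}{\mu\nu}=8$ so the length formula is consistent.

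Next I would verify the three hypotheses of Theorem \ref{thm5} for the code of length $n$. Condition (i) is automatic because both $r-1$ and $r+1$ are even. Condition (ii) holds because $\mu=\tfrac{r+1}{2}$ is even (as $r+1\equiv 0\pmod 4$), while $\mu\mid\nu(r+1)$ and $\nu\mid\mu(r-1)$ are clear from $\tfrac{r+1}{2}\mid r+1$. For condition (iii), the two quantities
\[(\tfrac{q-1}{\mu}-1)\nu-\tfrac{(r+1)\nu}{\mu}s=\bigl(2(r-1)-1\bigr)(r-1)-2(r-1)s\]
and $\tfrac{(r+1)\nu}{\mu}s+\nu=2(r-1)s+(r-1)=(r-1)(2s+1)$ both have $r-1$ as a factor and are therefore even.

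For the code of length $n+2$ I would invoke Theorem \ref{thm7} and check its four conditions; (i)--(iii) are verified as above, and condition (iv) reduces to requiring $(r-1)s(s-1)$ to be even, which is immediate. The main obstacle is not conceptual but merely bookkeeping: one must correctly identify the divisors $\mu,\nu$ from the coefficient pattern of $n$, pin down $\gcd(\mu,\nu)$ using $r\equiv 3\pmod 4$, and then mechanically check that the parity and divisibility hypotheses of Theorems \ref{thm5} and \ref{thm7} are satisfied for this specific pair.
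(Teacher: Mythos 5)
Your proposal is correct and takes exactly the same route as the paper: choose $\mu=\tfrac{r+1}{2}$, $\nu=r-1$, compute $\gcd(\mu,\nu)=2$ so that the ranges and the length formula match, and then check the hypotheses of Theorems \ref{thm5} and \ref{thm7}. The paper leaves the condition-checking as "it can be verified," whereas you carry it out explicitly, and your verifications are all accurate.
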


\begin{proof}
Let $\mu=\frac{r+1}{2}$ and $\nu=r-1$. Then $\frac{\mu}{\gcd(\mu, \nu)}=\frac{r+1}{4}$, $\frac{\nu}{\gcd(\mu, \nu)}= \frac{r-1}{2}$ and  $s\frac{q-1}{\mu}+t\frac{q-1}{\nu}-\frac{2(q-1)\gcd(\mu,\nu)}{\mu\nu}st=2s(r-1)+t(r+1)-8st=n$.  It can be verified that the conditions in Theorems \ref{thm5} and \ref{thm7} hold. Thus the theorem follows.
\end{proof}

\begin{example}
In Theorem \ref{thm10}, let $r=27$, $q=r^{2}=729$, $s=2$ and $t=1$. Then $n=2s(r-1)+t(r+1)-8st=116$.
So we can obtain two MDS Euclidean self-dual codes of lengths $116$ and $118$ over $\mathbb{F}_{729}$, respectively. The parameters of the two codes are new in the sense that they have not been obtained in the literature (see Table I).
\end{example}

\begin{theorem}\label{thm11}
Suppose $q=r^{2}$ and $r=p^{m}$ with $r \equiv 3 (\textnormal{mod } 4)$. Let $0 \leq s \leq r+1$ and $0 \leq t \leq \frac{r-1}{2}$. Put $n=s\frac{r-1}{2}+t(r+1)-2st$.
\begin{description}
  \item[(i)] If $s$ is even, then there exists a $q$-ary MDS Euclidean self-dual code of length $n$;
  \item[(ii)] If $s \equiv 1 (\textnormal{mod } 4)$, then there exists a $q$-ary MDS Euclidean self-dual code of length $n+1$;
  \item[(iii)] If $s \equiv 0 (\textnormal{mod } 4)$, then there exists a $q$-ary MDS Euclidean self-dual code of length $n+2$.
\end{description}
\end{theorem}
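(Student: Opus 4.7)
The plan is to apply the three generic Theorems \ref{thm5}, \ref{thm6}, \ref{thm7} with a single choice of subgroup exponents, namely $(\mu,\nu) = (2(r+1),\,r-1)$, matching parts (i), (ii), (iii) respectively to these three theorems. The first task is to verify that this pair reproduces the claimed length formula and the stated ranges of $s$ and $t$. Because $r\equiv 3\pmod 4$ we have $r-1\equiv 2\pmod 4$ and $r+1\equiv 0\pmod 4$, from which a short calculation gives $\gcd(2(r+1),\,r-1)=2$. Therefore $\mu/\gcd(\mu,\nu)=r+1$ and $\nu/\gcd(\mu,\nu)=(r-1)/2$, matching the bounds $0\le s\le r+1$ and $0\le t\le (r-1)/2$ in the statement. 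Substituting $(q-1)/\mu=(r-1)/2$, $(q-1)/\nu=r+1$, and $2(q-1)\gcd(\mu,\nu)/(\mu\nu)=2$ into the generic length expression of Theorems \ref{thm5}--\ref{thm7}, one gets exactly $s(r-1)/2+t(r+1)-2st=n$.

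The divisibility hypotheses $\mu\mid\nu(r+1)$ and $\nu\mid\mu(r-1)$ common to all three theorems are immediate, since $\nu(r+1)=q-1$ and $\mu(r-1)=2(q-1)$. Everything else reduces to parity computations that depend on the residue of $s$ modulo $4$, and this is precisely why the three parts of the statement are split that way. For part (i) with $s$ even, I would verify condition (iii) of Theorem \ref{thm5}: the two quantities $\frac{r-3}{2}(r-1)-\frac{r-1}{2}s$ and $\frac{r-1}{2}s+(r-1)$ are even, using $4\mid r-3$ together with $2\mid s$. For part (iii) with $s\equiv 0\pmod 4$, I would verify conditions (iii) and (iv) of Theorem \ref{thm7}: the quantities $(r-1)^2/2$, $\frac{r-1}{2}s$, and $\frac{r-1}{2}\cdot\frac{s(s-1)}{2}$ are all even, using $4\mid(r-1)^2$ and $4\mid s$ together with $s-1$ odd. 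For part (ii) with $s\equiv 1\pmod 4$, I would verify condition (iv) of Theorem \ref{thm6}, namely $\frac{r-1}{2}\cdot\frac{s(s-1)}{2}$ even, which holds because $4\mid(s-1)$ makes $s(s-1)/2$ even.

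The main obstacle is the parity bookkeeping in condition (iii) of Theorem \ref{thm6} for part (ii): here $(r-1)/2$ is odd and $s$ is odd, so at first glance $\frac{r-1}{2}s$ appears odd and threatens to break the required evenness. The resolution hinges on using the hypothesis $r\equiv 3\pmod 4$ (so that $(r+1)/2$ is even and $\frac{r-3}{2}$ is even) to regroup the terms arising from $\delta_A(\beta^i\alpha^j)$ and $\pi_A(\alpha^i\beta^j)$ in the proofs of Theorems \ref{thm5}--\ref{thm7}, and absorbing the $\eta(-1)=1$ factor (available because $q=r^2\equiv 1\pmod 4$) when applying Lemma \ref{lem3}. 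Tracking these cancellations carefully is the central technical step; once they are established, Lemmas \ref{lem2} and \ref{lem3} finish the argument exactly as in Theorems \ref{thm5}--\ref{thm7}.
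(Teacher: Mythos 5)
Your choice $(\mu,\nu)=(2(r+1),\,r-1)$ is exactly the one the paper uses, and your verifications for parts (i) and (iii) are correct and complete: $\gcd(2(r+1),r-1)=2$, the length formula and the ranges of $s,t$ come out right, and with $s$ even both quantities in condition (iii) of Theorem \ref{thm5} are even, while with $s\equiv 0\ (\mathrm{mod}\ 4)$ conditions (iii) and (iv) of Theorem \ref{thm7} hold. For those two parts you have simply filled in the parity checks that the paper leaves to the reader.

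For part (ii), however, there is a genuine gap, and you have in fact located a real defect rather than a removable technicality. With this $(\mu,\nu)$ one has $\frac{(r+1)\nu}{\mu}=\frac{r-1}{2}$, which is odd because $r\equiv 3\ (\mathrm{mod}\ 4)$; hence for odd $s$ both quantities in condition (iii) of Theorem \ref{thm6} are odd and the hypothesis of that theorem fails, as you observe. But your proposed repair (regrouping terms and absorbing $\eta(-1)=1$) cannot work, because the obstruction is not an overall sign: the coefficient of $i$ in the exponent of $\delta_{A}(\beta^{i}\alpha^{j})$ in Eq.~\eqref{7}, namely $(\frac{q-1}{\mu}-1)\nu-\frac{(r+1)\nu}{\mu}(s-2)$, is odd here, so $\eta\big(\delta_{A}(\beta^{i}\alpha^{j})\pi_{B}(\beta^{i}\alpha^{j})\big)$ equals $(-1)^{i}$ times a constant and alternates between the cosets $A_{0},\dots,A_{s-1}$ (recall $\pi_{B}(\beta^{i}\alpha^{j})\in\mathbb{F}_{r}$ is always a square). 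A concrete check with $r=7$, $s=5$, $t=1$ gives $\eta(\Omega_{i})=(-1)^{i}$, so the hypothesis of Lemma \ref{lem5} is genuinely violated for odd $s\geq 3$; only $s=1$ survives, since then there is a single coset. So your part (ii) is not proved. You should be aware that the paper's own proof of this part consists of the bare assertion that the conditions of Theorem \ref{thm6} ``can be verified,'' which is false for this pair $(\mu,\nu)$; the gap you hit is the paper's, and closing it would require either a different choice of $(\mu,\nu)$ or a construction outside the framework of Theorems \ref{thm5}--\ref{thm7}.
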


\begin{proof}
Let $\mu=2(r+1)$ and $\nu=r-1$. Then $\frac{\mu}{\gcd(\mu, \nu)}=r+1$, $\frac{\nu}{\gcd(\mu, \nu)}= \frac{r-1}{2}$ and  $s\frac{q-1}{\mu}+t\frac{q-1}{\nu}-\frac{2(q-1)\gcd(\mu,\nu)}{\mu\nu}st=s\frac{r-1}{2}+t(r+1)-2st=n$.
\begin{description}
  \item[(i)]  If $s$ is even, then $n$ is even and it can be verified that the conditions in Theorem \ref{thm5} hold;
  \item[(ii)] If $s \equiv 1 (\textnormal{mod } 4)$, then $n$ is odd and it can be verified that the conditions in Theorem \ref{thm6} hold;
  \item[(iii)] If $s \equiv 0 (\textnormal{mod } 4)$, then $n$ is even and it can be verified that the conditions in Theorem \ref{thm7} hold.
\end{description}
Thus the theorem follows.
\end{proof}

\begin{example}
In Theorem \ref{thm11}, let $r=19$ and $q=r^{2}=361$.
\begin{description}
  \item[(i)] Let $s=8$ and $t=4$. Then  $n=s\frac{r-1}{2}+t(r+1)-2st=88$. So we can obtain an MDS Euclidean self-dual code of length $88$ over $\mathbb{F}_{361}$ from Theorem \ref{thm11} (i).
  \item[(ii)] Let $s=13$ and $t=1$. Then $n=s\frac{r-1}{2}+t(r+1)-2st=111$. So we can obtain an MDS Euclidean self-dual code of length $112$ over $\mathbb{F}_{361}$ from Theorem \ref{thm11} (ii).
  \item[(iii)] Let $s=12$ and $t=1$. Then $n=s\frac{r-1}{2}+t(r+1)-2st=104$. So we can obtain an MDS Euclidean self-dual code of length $106$ over $\mathbb{F}_{361}$ from Theorem \ref{thm11} (iii).
\end{description}
The parameters of the above three codes are new in the sense that they have not been obtained in the literature (see Table I).
\end{example}

\begin{theorem}\label{thm12}
Suppose $q=r^{2}$ and $r=p^{m}$ with $r \equiv 3 (\textnormal{mod } 4)$. Let $0 \leq s \leq \frac{r+1}{4}$ and $0 \leq t \leq \frac{r-1}{2}$. Put $n=s(r-1)+t\frac{r+1}{2}-4st$.
Then there exist two $q$-ary MDS Euclidean self-dual codes of lengths $n$ and $n+2$, respectively.
\end{theorem}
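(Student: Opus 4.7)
The plan is to apply the generic constructions of Theorems \ref{thm5} and \ref{thm7} with a suitable choice of divisors $\mu,\nu$ of $q-1$, mirroring the strategy already used to derive Theorems \ref{thm8}--\ref{thm11}. The target length decomposes as $n = s(r-1) + t\frac{r+1}{2} - 4st$, so I would reverse-engineer $\mu$ and $\nu$ from the identities $\frac{q-1}{\mu} = r-1$ and $\frac{q-1}{\nu} = \frac{r+1}{2}$, which forces
\[
\mu = r+1, \qquad \nu = 2(r-1).
\]
This is the natural ``dual'' of the choice $(\mu,\nu)=(\tfrac{r+1}{2},r-1)$ used in Theorem \ref{thm10}, and I expect the rest of the proof to be a mechanical verification.

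First I would compute $\gcd(\mu,\nu)=\gcd(r+1,2(r-1))$. Since $r\equiv 3\pmod 4$ implies $4\mid r+1$, the standard reduction $\gcd(r+1,2(r-1)) = \gcd(r+1,4) = 4$ applies. Consequently
\[
\frac{\mu}{\gcd(\mu,\nu)} = \frac{r+1}{4}, \qquad \frac{\nu}{\gcd(\mu,\nu)} = \frac{r-1}{2},
\]
which exactly matches the stated ranges $0\le s\le\frac{r+1}{4}$ and $0\le t\le\frac{r-1}{2}$. Moreover,
\[
\frac{2(q-1)\gcd(\mu,\nu)}{\mu\nu} = \frac{8(q-1)}{2(r+1)(r-1)} = 4,
\]
so the length formula in Theorems \ref{thm5} and \ref{thm7} specializes precisely to $n$.

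Next I would verify the hypotheses of both Theorems \ref{thm5} and \ref{thm7}. Parity of $n$ is immediate: $r-1$ is even and $\frac{r+1}{2}$ is even (using $r\equiv 3\pmod 4$), so $n$ is even. The divisibility conditions $\mu\mid\nu(r+1)$ and $\nu\mid\mu(r-1)$ reduce to $(r+1)\mid 2(r-1)(r+1)$ and $2(r-1)\mid(r+1)(r-1)$, both obvious since $\frac{r+1}{2}$ is an integer. For the parity conditions of Theorem \ref{thm5}, a direct computation gives
\[
\Bigl(\tfrac{q-1}{\mu}-1\Bigr)\nu - \tfrac{(r+1)\nu}{\mu}s = 2(r-1)(r-2-s), \qquad \tfrac{(r+1)\nu}{\mu}s + \nu = 2(r-1)(s+1),
\]
both even. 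For Theorem \ref{thm7}, one has $\tfrac{q-1}{\mu}\nu = 2(r-1)^2$, $\tfrac{(r+1)\nu}{\mu}s = 2(r-1)s$, and $\tfrac{\nu(r+1)}{\mu}\tfrac{s(s-1)}{2} = (r-1)s(s-1)$, all even since $r-1$ is even.

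There is no substantive obstacle here; the main work is simply checking parities and divisibilities, and each of them rests on the assumption $r\equiv 3\pmod 4$ (which gives $4\mid r+1$) together with $r$ being odd. Once these verifications are in place, Theorems \ref{thm5} and \ref{thm7} produce $q$-ary MDS Euclidean self-dual codes of lengths $n$ and $n+2$ respectively, completing the proof.
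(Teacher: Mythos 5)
Your proposal is correct and takes essentially the same approach as the paper: the paper's proof also sets $\mu=r+1$, $\nu=2(r-1)$ and invokes Theorems \ref{thm5} and \ref{thm7}, leaving the condition-checking as "can be verified." Your explicit verification of the gcd, the length formula, and the parity/divisibility conditions is accurate and simply fills in the details the paper omits.
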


\begin{proof}
Let $\mu=r+1$ and $\nu=2(r-1)$. Then $\frac{\mu}{\gcd(\mu, \nu)}=\frac{r+1}{4}$, $\frac{\nu}{\gcd(\mu, \nu)}= \frac{r-1}{2}$ and  $s\frac{q-1}{\mu}+t\frac{q-1}{\nu}-\frac{2(q-1)\gcd(\mu,\nu)}{\mu\nu}st=s(r-1)+t\frac{r+1}{2}-4st=n$.  It can be verified that the conditions in Theorems \ref{thm5} and \ref{thm7} hold. Thus the theorem follows.
\end{proof}

\begin{example}
In Theorem \ref{thm12}, let $r=31$ and $q=r^{2}=961$.
\begin{description}
  \item[(i)] Let $s=7$ and $t=1$. Then  $n=s(r-1)+t\frac{r+1}{2}-4st=198$. So we can obtain an MDS Euclidean self-dual code of length $198$ over $\mathbb{F}_{961}$ from Theorem \ref{thm12}.
  \item[(ii)] Let $s=2$ and $t=14$. Then $n=s(r-1)+t\frac{r+1}{2}-4st=172$. So we can obtain an MDS Euclidean self-dual code of length $174$ over $\mathbb{F}_{961}$ from Theorem \ref{thm12}.
\end{description}
The parameters of the above two codes are new in the sense that they have not been obtained in the literature (see Table I).
\end{example}

\begin{theorem}\label{thm13}
Suppose $q=r^{2}$ and $r=p^{m}$ with $r \equiv 3 (\textnormal{mod } 4)$. Let  $0 \leq s \leq \frac{r+1}{2}$ and $0 \leq t \leq \frac{r-1}{2}$. Put $n=2s(r-1)+2t(r+1)-8st$. Then there exists a $q$-ary MDS Euclidean self-dual code of length $n+2$.
\end{theorem}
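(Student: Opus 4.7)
The plan is to imitate the pattern of Theorems \ref{thm8}--\ref{thm12}: each such theorem selects a particular pair $(\mu,\nu)$ and invokes one of Theorems \ref{thm5}--\ref{thm7}. The shape $n = 2s(r-1) + 2t(r+1) - 8st$ essentially forces $\frac{q-1}{\mu} = 2(r-1)$ and $\frac{q-1}{\nu} = 2(r+1)$, so I would take
\[
\mu = \frac{r+1}{2}, \qquad \nu = \frac{r-1}{2},
\]
and apply Theorem \ref{thm7} (the ``length $n+2$'' version).

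First I would confirm that the set-up matches the hypotheses. Since $\mu - \nu = 1$, we have $\gcd(\mu,\nu) = 1$, so $\mu/\gcd(\mu,\nu) = (r+1)/2$ and $\nu/\gcd(\mu,\nu) = (r-1)/2$, which coincides exactly with the declared ranges of $s$ and $t$. A direct computation gives
\[
s\frac{q-1}{\mu} + t\frac{q-1}{\nu} - \frac{2(q-1)\gcd(\mu,\nu)}{\mu\nu}st = 2s(r-1) + 2t(r+1) - 8st = n.
\]

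Next I would verify the four conditions of Theorem \ref{thm7}. Condition (i) is immediate since every term of $n$ is even. For (ii), we have $\nu(r+1) = \mu(r-1)$, so both divisibilities $\mu \mid \nu(r+1)$ and $\nu \mid \mu(r-1)$ hold trivially. For (iii), $\frac{q-1}{\mu}\nu = (r-1)^{2}$ and $\frac{(r+1)\nu}{\mu}s = (r-1)s$ are both even since $r$ is odd. For (iv), $\frac{\nu(r+1)}{\mu}\frac{s(s-1)}{2} = (r-1)\frac{s(s-1)}{2}$ is even because $r-1$ is even. Theorem \ref{thm7} then produces the desired MDS Euclidean self-dual code of length $n+2$.

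There is no substantive obstacle once the pair $(\mu,\nu)$ is identified; the only noteworthy remark is why the theorem does not also assert existence at lengths $n$ or $n+1$. Theorem \ref{thm6} does not apply, as $n$ is forced to be even. Theorem \ref{thm5} also fails for this pair: its condition (iii) would require $(\tfrac{q-1}{\mu}-1)\nu - \tfrac{(r+1)\nu}{\mu}s = (2r-3)(r-1)/2 - (r-1)s$ to be even, but the first summand is odd because $(r-1)/2$ is odd when $r \equiv 3 (\textnormal{mod } 4)$, while the second is even. This explains the ``$+2$'' in the statement and leaves Theorem \ref{thm7} as the only applicable route.
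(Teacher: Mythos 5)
Your proposal matches the paper's proof exactly: the paper also takes $\mu=\frac{r+1}{2}$, $\nu=\frac{r-1}{2}$ and invokes Theorem \ref{thm7}, merely stating that the conditions ``can be verified.'' Your explicit verification of conditions (i)--(iv), and your observation that Theorem \ref{thm5} fails here because $(\frac{q-1}{\mu}-1)\nu-\frac{(r+1)\nu}{\mu}s$ is odd when $r\equiv 3\ (\mathrm{mod}\ 4)$, are both correct.
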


\begin{proof}
Let $\mu=\frac{r+1}{2}$ and $\nu=\frac{r-1}{2}$. Then $\frac{\mu}{\gcd(\mu, \nu)}=\frac{r+1}{2}$, $\frac{\nu}{\gcd(\mu, \nu)}= \frac{r-1}{2}$ and  $s\frac{q-1}{\mu}+t\frac{q-1}{\nu}-\frac{2(q-1)\gcd(\mu,\nu)}{\mu\nu}st=2s(r-1)+2t(r+1)-8st=n$.
It can be verified that the conditions in Theorem \ref{thm7} hold. Thus the theorem follows.
\end{proof}

\begin{example}
In Theorem \ref{thm13}, let $r=31$, $q=r^{2}=961$, $s=9$ and $t=2$. Then  $n=2s(r-1)+2t(r+1)-8st=524$. So we can obtain an MDS Euclidean self-dual code of length $526$ over $\mathbb{F}_{961}$ from Theorem \ref{thm13}.
The parameters of this code are new in the sense that they have not been obtained in the literature (see Table I).
\end{example}

\begin{theorem}\label{thm14}
Suppose $q=r^{2}$ and $r=p^{m}$ with $r \equiv 3 (\textnormal{mod } 4)$. Let  $0 \leq s \leq \frac{r+1}{2}$ and $0 \leq t \leq \frac{r-1}{2}$. Put $n=s\frac{r-1}{2}+t\frac{r+1}{2}-2st$.
\begin{description}
  \item[(i)] If $s$ is even, then there exists a $q$-ary MDS Euclidean self-dual code of length $n$;
  \item[(ii)] If $s$ is odd, then there exists a $q$-ary MDS Euclidean self-dual code of length $n+1$;
  \item[(iii)] If $s$ is even, then there exists a $q$-ary MDS Euclidean self-dual code of length $n+2$.
\end{description}
\end{theorem}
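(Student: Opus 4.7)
The plan is to derive Theorem \ref{thm14} as a direct specialization of the generic constructions in Theorems \ref{thm5}, \ref{thm6}, and \ref{thm7}, following the same pattern used for Theorems \ref{thm8}--\ref{thm13}. The choice of parameters is dictated by matching the length formula: setting $\mu = 2(r+1)$ and $\nu = 2(r-1)$, we get $\frac{q-1}{\mu} = \frac{r-1}{2}$ and $\frac{q-1}{\nu} = \frac{r+1}{2}$, while $\gcd(\mu,\nu) = 2\gcd(r+1,r-1) = 4$ since $r$ is odd. Consequently $\frac{\mu}{\gcd(\mu,\nu)} = \frac{r+1}{2}$, $\frac{\nu}{\gcd(\mu,\nu)} = \frac{r-1}{2}$, and $\frac{2(q-1)\gcd(\mu,\nu)}{\mu\nu} = 2$, so the length formula in Theorems \ref{thm5}--\ref{thm7} reduces to $n = s\frac{r-1}{2} + t\frac{r+1}{2} - 2st$ as required.

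The first step is to verify the common structural hypotheses: $\mu = 2(r+1)$ is even, $\mu \mid \nu(r+1) = 2(r-1)(r+1) = 2(q-1)$, and $\nu \mid \mu(r-1) = 2(q-1)$ both hold trivially. Next, I would determine the parity of $n$. Since $r \equiv 3 \pmod 4$, we have $\frac{r-1}{2}$ odd and $\frac{r+1}{2}$ even, so $n \equiv s \pmod 2$. This immediately separates parts (i) and (iii) (with $s$ even, $n$ even) from part (ii) (with $s$ odd, $n$ odd), matching the hypotheses $(i)$ of Theorems \ref{thm5}, \ref{thm7}, and \ref{thm6} respectively.

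For part (i), I would apply Theorem \ref{thm5}. The remaining condition (iii) of Theorem \ref{thm5} asks that $(\frac{q-1}{\mu}-1)\nu - \frac{(r+1)\nu}{\mu}s = (r-1)(r-3-s)$ and $\frac{(r+1)\nu}{\mu}s + \nu = (r-1)(s+2)$ both be even, which is immediate since $r-1$ is even. For part (ii), I would apply Theorem \ref{thm6}; condition (iii) is the same as above, and the extra condition (iv) amounts to $(r-1)\cdot\frac{s(s-1)}{2}$ being even, which holds because $s(s-1)$ is always even (product of consecutive integers) and $r-1$ is even. For part (iii), I would apply Theorem \ref{thm7}; condition (iii) requires $\frac{q-1}{\mu}\nu = (r-1)^2$ and $\frac{(r+1)\nu}{\mu}s = (r-1)s$ to be even, both clear, and condition (iv) is as in part (ii).

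The main obstacle is merely careful bookkeeping of the parity conditions, since all the substantive work has already been done in Theorems \ref{thm5}--\ref{thm7}. The one point worth double-checking is condition (iv) in Theorem \ref{thm6} in the odd-$s$ case, where one must note that the factor $\frac{s(s-1)}{2}$ need not be even, but the overall product $(r-1)\cdot\frac{s(s-1)}{2}$ still is because $r-1$ supplies the even factor; this is the key observation that makes part (ii) go through without further restrictions on $s$.
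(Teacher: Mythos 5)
Your proposal is correct and follows exactly the paper's own route: the paper's proof of Theorem \ref{thm14} likewise sets $\mu=2(r+1)$, $\nu=2(r-1)$ and invokes Theorems \ref{thm5}, \ref{thm6}, \ref{thm7} for parts (i), (ii), (iii) respectively. You simply spell out the parity verifications that the paper leaves as ``it can be verified,'' and your checks are accurate.
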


\begin{proof}
Let $\mu=2(r+1)$ and $\nu=2(r-1)$. Then $\frac{\mu}{\gcd(\mu, \nu)}=\frac{r+1}{2}$, $\frac{\nu}{\gcd(\mu, \nu)}= \frac{r-1}{2}$ and  $s\frac{q-1}{\mu}+t\frac{q-1}{\nu}-\frac{2(q-1)\gcd(\mu,\nu)}{\mu\nu}st=s\frac{r-1}{2}+t\frac{r+1}{2}-2st=n$.
\begin{description}
  \item[(i)] If $s$ is even, then $n$ is even and it can be verified that the conditions in Theorem \ref{thm5} hold;
  \item[(ii)] If $s$ is odd, then $n$ is odd and it can be verified that the conditions in Theorem \ref{thm6} hold;
  \item[(iii)] If $s$ is even, then $n$ is even and it can be verified that the conditions in Theorem \ref{thm7} hold.
\end{description}
Thus the theorem follows.
\end{proof}

\begin{example}
In Theorem \ref{thm14}, let $r=23$ and $q=r^{2}=529$.
\begin{description}
  \item[(i)] Let $s=10$ and $t=1$. Then  $n=s\frac{r-1}{2}+t\frac{r+1}{2}-2st=102$. So we can obtain an MDS Euclidean self-dual code of length $102$ over $\mathbb{F}_{529}$ from Theorem \ref{thm14} (i).
  \item[(ii)] Let $s=9$ and $t=4$. Then $n=s\frac{r-1}{2}+t\frac{r+1}{2}-2st=75$. So we can obtain an MDS Euclidean self-dual code of length $76$ over $\mathbb{F}_{529}$ from Theorem \ref{thm14} (ii).
  \item[(iii)] Let $s=10$ and $t=1$. Then  $n=s\frac{r-1}{2}+t\frac{r+1}{2}-2st=102$. So we can obtain an MDS Euclidean self-dual code of length $104$ over $\mathbb{F}_{529}$ from Theorem \ref{thm14} (iii).
\end{description}
The parameters of the above three codes are new in the sense that they have not been obtained in the literature (see Table I).
\end{example}

\begin{theorem}\label{thm15}
Suppose $q=r^{2}$ and $r=p^{m}$ with $r \equiv 3 (\textnormal{mod } 4)$. Suppose $r+1=2^{a}\cdot b$, where $b$ is odd. Let $0 \leq s \leq b$ and $0 \leq t \leq r-1$. Put $n=s(r-1)+tb-2st$.
\begin{description}
  \item[(i)] If $t$ is even, then there exists a $q$-ary MDS Euclidean self-dual code of length $n$;
  \item[(ii)] If $t$ is odd, then there exists a $q$-ary MDS Euclidean self-dual code of length $n+1$;
  \item[(iii)] If $t$ is even, then there exists a $q$-ary MDS Euclidean self-dual code of length $n+2$.
\end{description}
\end{theorem}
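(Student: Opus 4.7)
The plan is to deduce all three parts from the general constructions in Theorems \ref{thm5}, \ref{thm6}, and \ref{thm7} by making a single choice of the parameters $(\mu, \nu)$, exactly as was done for Theorems \ref{thm8}--\ref{thm14}. I would set
\[
\mu = r+1 = 2^{a}b \quad \text{and} \quad \nu = 2^{a}(r-1) = \frac{q-1}{b}.
\]
With this choice, $\mu$ and $\nu$ both divide $q-1$; moreover, since $b$ is odd and $\gcd(r+1, r-1) = 2$, we have $\gcd(b, r-1) = 1$, hence $\gcd(\mu, \nu) = 2^{a}$. Consequently $\mu/\gcd(\mu,\nu) = b$ and $\nu/\gcd(\mu,\nu) = r-1$, matching the bounds $s \leq b$ and $t \leq r-1$. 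A direct substitution yields
\[
s\,\tfrac{q-1}{\mu} + t\,\tfrac{q-1}{\nu} - \tfrac{2(q-1)\gcd(\mu,\nu)}{\mu\nu}\,st = s(r-1) + tb - 2st = n,
\]
so the parameter $n$ agrees with the one in Theorems \ref{thm5}--\ref{thm7}.

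The next step is to verify the remaining arithmetic and parity hypotheses of those three theorems. The divisibility conditions $\mu \mid \nu(r+1)$ and $\nu \mid \mu(r-1)$ are immediate from $\mu = r+1$ and from $\mu(r-1) = 2^{a}b(r-1) = b\nu$. For parity, note that $r\equiv 3\pmod 4$ forces $a \geq 2$, so $\mu$ is even; furthermore $\tfrac{q-1}{\mu} = r-1$ and $\tfrac{(r+1)\nu}{\mu} = 2^{a}(r-1)$, so both
\[
\Bigl(\tfrac{q-1}{\mu} - 1\Bigr)\nu - \tfrac{(r+1)\nu}{\mu}s = 2^{a}(r-1)\bigl[(r-2) - s\bigr] \quad\text{and}\quad \tfrac{(r+1)\nu}{\mu}s + \nu = 2^{a}(r-1)(s+1)
\]
are even, as is $\tfrac{q-1}{\mu}\nu = 2^{a}(r-1)^{2}$. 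The condition $\tfrac{\nu(r+1)}{\mu}\cdot\tfrac{s(s-1)}{2} = 2^{a}(r-1)\cdot\tfrac{s(s-1)}{2}$ is also automatically even, since $2^{a}(r-1)$ is even.

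Finally, the parity of $n$ itself controls which of the three subresults applies: because $r-1$ is even and $2st$ is even, $n \equiv tb \equiv t \pmod{2}$ (using that $b$ is odd). Thus, when $t$ is even, $n$ is even and all the hypotheses of Theorems \ref{thm5} and \ref{thm7} are satisfied, giving MDS Euclidean self-dual codes of lengths $n$ and $n+2$, respectively, which proves (i) and (iii). When $t$ is odd, $n$ is odd and the hypotheses of Theorem \ref{thm6} (including condition (iv) checked above) are satisfied, yielding a code of length $n+1$, which proves (ii).

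There is no real mathematical obstacle here beyond careful bookkeeping: every hypothesis of the three generic theorems reduces to an elementary parity statement that holds because $r\equiv 3\pmod 4$ (so $2^{a}\mid r+1$ with $a\geq 2$) and because $b$ is odd (so $\gcd(b, r-1)=1$). The only step that requires a moment's thought is verifying $\gcd(\mu,\nu) = 2^{a}$; once this is established, every other condition follows by direct substitution.
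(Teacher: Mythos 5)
Your proposal is correct and follows exactly the paper's route: the paper also takes $\mu = r+1$ and $\nu = 2^{a}(r-1)$ and invokes Theorems \ref{thm5}, \ref{thm6}, \ref{thm7} according to the parity of $t$ (equivalently of $n$). The only difference is that you spell out the verification of the parity and divisibility hypotheses, which the paper leaves as ``it can be verified''; your checks are accurate.
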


\begin{proof}
Let $\mu=r+1$ and $\nu=2^{a}(r-1)$. Then $\frac{\mu}{\gcd(\mu, \nu)}=\frac{r+1}{2^{a}}=b$, $\frac{\nu}{\gcd(\mu, \nu)}= r-1$ and  $s\frac{q-1}{\mu}+t\frac{q-1}{\nu}-\frac{2(q-1)\gcd(\mu,\nu)}{\mu\nu}st=s(r-1)+tb-2st=n$.
\begin{description}
  \item[(i)] If $t$ is even, then $n$ is even and it can be verified that the conditions in Theorem \ref{thm5} hold;
  \item[(ii)] If $t$ is odd, then $n$ is odd and it can be verified that the conditions in Theorem \ref{thm6} hold;
  \item[(iii)] If $t$ is even, then $n$ is even and it can be verified that the conditions in Theorem \ref{thm7} hold.
\end{description}
Thus the theorem follows.
\end{proof}

\begin{example}
In Theorem \ref{thm15}, let $r=27$ and $q=r^{2}=729$.
\begin{description}
  \item[(i)] Let $s=2$ and $t=8$. Then  $n=s(r-1)+t\frac{r+1}{2^{a}}-2st=76$. So we can obtain an MDS Euclidean self-dual code of length $76$ over $\mathbb{F}_{729}$ from Theorem \ref{thm15} (i).
  \item[(ii)] Let $s=4$ and $t=11$. Then  $n=s(r-1)+t\frac{r+1}{2^{a}}-2st=93$. So we can obtain an MDS Euclidean self-dual code of length $94$ over $\mathbb{F}_{729}$ from Theorem \ref{thm15} (ii).
  \item[(iii)]Let $s=6$ and $t=16$. Then  $n=s(r-1)+t\frac{r+1}{2^{a}}-2st=66$. So we can obtain an MDS Euclidean self-dual code of length $68$ over $\mathbb{F}_{729}$ from Theorem \ref{thm15} (iii).
\end{description}
The parameters of the above three codes are new in the sense that they have not been obtained in the literature (see Table 1).
\end{example}

\begin{theorem}\label{thm16}
Suppose $q=r^{2}$ and $r=p^{m}$ with $r \equiv 3 (\textnormal{mod } 4)$. Suppose $r+1=2^{a}\cdot b$, where $b$ is odd. Let $0 \leq s \leq b$ and $0 \leq t \leq r-1$. Put $n=s(r-1)2^{a}+t(r+1)-2st$.
Then there exist two $q$-ary MDS Euclidean self-dual codes of lengths $n$ and $n+2$, respectively.
\end{theorem}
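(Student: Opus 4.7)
My plan is to follow the same template used in Theorems \ref{thm8}--\ref{thm15}: pick divisors $(\mu,\nu)$ of $q-1$, form the subsets $A$ and $B$ of \eqref{3}--\eqref{4}, and apply Theorems \ref{thm5} and \ref{thm7} to this pair in order to produce the two MDS Euclidean self-dual codes of lengths $n$ and $n+2$.

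Guided by the stated ranges $0\le s\le b$ and $0\le t\le r-1$, the natural choice is $\mu = b$ and $\nu = r-1$. Since $b$ is odd, $b\mid r+1$, and $\gcd(r+1,r-1)=2$, one has $\gcd(\mu,\nu)=1$, so that $\mu/\gcd(\mu,\nu)=b$ and $\nu/\gcd(\mu,\nu)=r-1$, matching the hypothesis bounds exactly. Lemma \ref{lem6} then supplies the target cardinality for $|A\triangle B|$. The divisibility conditions $\mu\mid\nu(r+1)$ and $\nu\mid\mu(r-1)$ of Theorems \ref{thm5} and \ref{thm7} are immediate from $b\mid r+1$ and from $(r-1)\mid b(r-1)$.

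For the parity hypotheses, I will use two consequences of $r\equiv 3\pmod 4$: (a) $4\mid r+1$, hence $a\ge 2$; and (b) $r-1\equiv 2\pmod 4$. These force the three relevant quantities
\[
\frac{(q-1)\nu}{\mu}=2^{a}(r-1)^{2},\qquad \frac{(r+1)\nu s}{\mu}=2^{a}(r-1)s,\qquad \frac{\nu(r+1)s(s-1)}{2\mu}=2^{a-1}(r-1)s(s-1)
\]
to be even, which are precisely conditions (iii) and (iv) of Theorem \ref{thm7}. This delivers the length-$(n+2)$ code directly.

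The principal obstacle is that Theorem \ref{thm5} additionally requires $\mu$ to be even, which fails for $\mu=b$ odd. I expect to handle this by passing to the pair $(\mu,\nu)=(2b,2(r-1))$ for the length-$n$ code. Because $b$ is coprime to $r-1$ (and hence to $(r-1)/2$), a short computation gives $\gcd(2b,2(r-1))=2$, so the ratios $\mu/\gcd(\mu,\nu)=b$ and $\nu/\gcd(\mu,\nu)=r-1$ are preserved; the gain of a factor $2$ in each of $\mu,\nu$ is compensated by the gcd in Lemma \ref{lem6}, keeping the length equal to $n$, while now $\mu$ is even as required. The parities of the new expressions appearing in Theorem \ref{thm5}'s conditions (iii) are again secured by the same $a\ge 2$ arithmetic, so Theorem \ref{thm5} yields the length-$n$ code and the proof concludes.
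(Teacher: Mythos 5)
Your treatment of the length-$(n+2)$ code agrees with the paper: the paper likewise takes $\mu=b$, $\nu=r-1$ and invokes Theorem \ref{thm7}, and your parity checks for its conditions (iii)--(iv) (using $a\ge 2$) are correct. You are also right that Theorem \ref{thm5}(ii) demands $\mu$ even, which $\mu=b$ violates; the paper's proof merely asserts that the conditions of Theorem \ref{thm5} ``can be verified'' for this odd $\mu$, so here you have spotted a genuine gap in the paper rather than introduced one.

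However, your repair for the length-$n$ code fails. Passing from $(\mu,\nu)=(b,r-1)$ to $(2b,2(r-1))$ does preserve the ratios $\mu/\gcd(\mu,\nu)=b$ and $\nu/\gcd(\mu,\nu)=r-1$, but it does not preserve $|A\triangle B|$. By Lemma \ref{lem6}, $|A|=s\frac{q-1}{\mu}$ and $|B|=t\frac{q-1}{\nu}$ are each halved when $\mu$ and $\nu$ are doubled, and $|A\cap B|=\frac{(q-1)\gcd(\mu,\nu)}{\mu\nu}st$ is halved as well, since the gcd doubles while $\mu\nu$ quadruples. Concretely, with $(2b,2(r-1))$ one gets $|A\triangle B|=s(r-1)2^{a-1}+t\frac{r+1}{2}-2^{a}st$, which is half the quantity produced by $(b,r-1)$, not $n$. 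There is no ``compensation by the gcd'': the gcd enters only the intersection term, and $|A|$, $|B|$ carry no gcd factor at all. So the length-$n$ half of the theorem is not established by your argument. A further issue, inherited from the paper and repeated in your first paragraph: even with $(\mu,\nu)=(b,r-1)$, Lemma \ref{lem6} gives $2|A\cap B|=\frac{2(q-1)}{b(r-1)}st=2^{a+1}st$, which does not match the term $2st$ in the stated formula for $n$ unless $a=0$, impossible when $r\equiv 3\pmod{4}$; so the claimed target cardinality itself needs to be rechecked before any choice of $(\mu,\nu)$ can certify it.
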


\begin{proof}
Let $\mu=b$ and $\nu=r-1$. Then $\frac{\mu}{\gcd(\mu, \nu)}=b$, $\frac{\nu}{\gcd(\mu, \nu)}= r-1$ and  $s\frac{q-1}{\mu}+t\frac{q-1}{\nu}-\frac{2(q-1)\gcd(\mu,\nu)}{\mu\nu}st=s(r-1)2^{a}+t(r+1)-2st=n$.
Then it can be verified that the conditions in Theorems \ref{thm5} and \ref{thm7} hold. Thus the theorem follows.
\end{proof}

\begin{example}
In Theorem \ref{thm16}, let $r=23$, $q=r^{2}=529$, $s=5$ and $t=4$. Then  $n=s(r-1)+t\frac{r+1}{2^{a}}-2st=376$. So we can obtain two MDS Euclidean self-dual codes of lengths $376$ and $378$ over $\mathbb{F}_{529}$ from Theorem \ref{thm16},respectively.
The parameters of these two codes are new in the sense that they have not been obtained in the literature (see Table I).
\end{example}
\section{Conclusion}
In this paper, we give a further research on construction of MDS Euclidean self-dual codes. The main tool of our results is the (extended) GRS codes and the key point of our construction is to choose suitable evaluation sets such that the corresponding (extended) GRS codes are Euclidean self-dual codes. By utilizing some multiplicative subgroups, additive subgroups and the trace function of finite fields, we provide several new families of MDS Euclidean self-dual codes which are not covered in the literature. In particular, we provide a useful lemma (see Lemma \ref{lem5}) to ensure that the symmetric difference of two intersecting subsets satisfies
 the conditions in Lemmas \ref{lem2} or \ref{lem3}, which can produce new MDS Euclidean self-dual codes. Based on Lemma \ref{lem5}, by using two multiplicative subgroups and their cosets, we present three generic and powerful constructions of MDS Euclidean self-dual codes with flexible parameters. Then several new classes MDS Euclidean self-dual codes are explicitly constructed. It will be interesting to employ other subsets of finite fields with special structures satisfying the conditions of Lemma \ref{lem5}. And generalizing the results of Lemma \ref{lem5} for more than two subsets and find more new constructions of MDS Euclidean self-dual codes will be one of the future research direction.

\section*{Acknowledge}
The research of W. Fang and  S.-T. Xia is supported in part by the National Key Research and Development Program of China under Grant 2018YFB1800204, the National Natural Science Foundation of China under Grant 61771273, Guangdong Basic and Applied Basic Research Foundation under Grant 2019A1515110904, the R$\&$D Program of Shenzhen under Grant JCYJ20180508152204044, and the project ``PCL Future Greater-Bay Area Network Facilities for Large-scale Experiments and Applications (LZC0019)". The research of F.-W. Fu is supported in part by the National Key Research and Development Program of China under Grant 2018YFA0704703, the National Natural Science Foundation of China under Grant Nos. 61971243, 61571243, and the Nankai Zhide Foundation.

\end{document}